\newcommand{\remove}[1]{{}}
\colorlet{darkgreen}{green!40!black}
\newcommand{\ch}{\ensuremath\mbox{\textnormal{ch}}}
\newcommand{\chol}{\ensuremath\mbox{\textnormal{ch}}^{\mbox{\scriptsize{\textnormal{OL}}}}}
\title{List Colouring Big Graphs On-Line}
\titlerunning{List Colouring On-Line}  
\author{Martin Derka\thanks{The first author was supported by Vanier CGS.} \and Alejandro L\'{o}pez-Ortiz 
    \and Daniela Maftuleac}
\authorrunning{Martin Derka et al.} 
\institute{D.~Cheriton School of Computer Science,\\ University of Waterloo, Waterloo, ON, N2L 3G1, Canada\\
\email{\{mderka,alopez-o,dmaftule\}@uwaterloo.ca}
}
\begin{document}
\maketitle

\begin{abstract}
In this paper, we investigate the problem of graph list
colouring in the on-line setting. We provide several results on
paintability of graphs in the model introduced by Schauz~\cite{cit:schauz}
and Zhu~\cite{cit:zhu}. We prove that the on-line version of Ohba's
conjecture is true in the class of planar graphs. We also consider
several alternate on-line list colouring models.
\end{abstract}

\section{Introduction}

Motivated by the processing of large social networks, we consider
the study of graph colouring in an on-line streaming manner. 
As we know, storing and analyzing a large social graph in main
memory for a single computer is not always possible. 
If the graph exceeds the capacity of the main memory, it has to be swapped 
to external memory. 
Furthermore, in some cases, the social network graph to be processed may not be entirely
known in advance. Social networks rarely provide their complete graphs to
third parties due to privacy issues. Instead, they make their graphs accessible
via API. If an external application aims to analyze the entire graph, it has to employ
a local neighbourhood discovery protocol similar to web crawlers. In such a case,
the application incurs costs associated with accessing a vertex of the graph in the form
of the network communication necessary for issuing the API call.
Both of the aforementioned problems make the traditional off-line analysis of
social network graphs challenging and sometimes even infeasible. Thus, there is 
renewed interest in analysing graphs on-line.

Aside from their practical application, on-line graph algorithms have also been a 
rich source of theoretical problems with, for example, the celebrated theoretical
results of Schauz~\cite{cit:schauz} and Zhu~\cite{cit:zhu}.

In this paper, we investigate the graph list colouring problem in the on-line setting.
In list colouring, the vertices of a graph are pre-assigned lists of colours, and the task is to
properly colour the graph so that every vertex receives a colour from its list.
In what follows, we study the problem from a theoretical point of view and resolve
several open questions on this subject.

\section{Definitions and Previous Work}
The graphs considered in this paper are simple and undirected.
We follow the standard terminology of graph theory (cf.~for instance~\cite{cit:diestel}).

Let $G$ be a graph. A \emph{list assignment} is a function $L: V(G) \rightarrow 2^\mathbb{N}$
which assigns every vertex of $G$ a list of admissible colours. A proper colouring $c: V(G) \rightarrow \mathbb{N}$
is called an \emph{$L$-colouring} of $G$ if it assigns every vertex $v$ a colour $c(v)$ from its list $L(v)$.
The \emph{choosability number} of $G$, denoted by $\ch(G)$,
is the minimum number $k$ such that $G$ has an $L$-colouring whenever $L$ assigns every
vertex a list of size at least $k$. For any $k \geq \ch(G)$, graph $G$ is called \emph{$k$-choosable}.
The choosability number of a graphs is sometimes also called the \emph{list-chromatic number}.
 
The (off-line) list colouring problem---to decide whether a graph has an $L$-colouring---was introduced by Vizing in 1976~\cite{cit:brooks-list-rus}.
The choosability of graphs was investigated by Erd\"os, Rubin and Taylor~\cite{cit:erdos} and later by many others.
If $L$ assigns every vertex the same list of colours, the instance of the list colouring problem becomes
an instance of the ``standard'' vertex colouring problem. Thus, $$\chi(G) \leq \ch(G)$$ and the problem is NP-complete. Voigt~\cite{cit:voigt}
showed in 1993 that the choosability number $\ch(G)$ can be strictly larger than the chromatic number $\chi(G)$
even for planar graphs.

The list colouring problem was brought to the on-line setting independently by Schauz~\cite{cit:schauz}
and Zhu \cite{cit:zhu} in 2009. Both the authors formulated the problem as a game of
two players. In this paper, we follow the terminology of Schauz~\cite{cit:schauz}.

The game is played by two players called Mr.~Paint and Mrs.~Correct on a known graph $G$.
In each round, the first player, Mr.~Paint, takes a new colour $c$ and colours some (at least one)
uncoloured vertices. The colour $c$ cannot be used again. There are no restrictions on the
colouring of Mr.~Paint---he can colour two adjacent vertices with the same colour.
The other player, Mrs.~Correct, attempts to correct Mr.~Paint's mistakes. For this purpose,
she has a finite number of so-called \emph{erasers} assigned to every vertex. She can use an
eraser to remove the colour $c$ from any subset of vertices which were coloured by Mr.~Paint in this round.
An eraser can be used only once. By doing so, the number of erasers available for the given vertex
decreases. The game ends when the entire graph is properly coloured in which case Mrs.~Correct wins,
or when Mrs.~Correct cannot correct the colouring because she ran out of erasers for some vertex.
In such a case, Mr.~Paint wins. 

If $L$ is an assignment of erasers to the vertices of $G$
and Mrs.~Correct has a winning strategy leading to a proper colouring of $G$, the graph is called \emph{$L$-paintable}.
If $\ell \in \mathbb{N}$ is a number of erasers that need to be assigned to every vertex of $G$
for Mrs.~Correct to always have a winning strategy, the graph is called $(\ell+1)$-paintable.
The minimum such number $(\ell+1)$ is the \emph{paintability number} of a graph, and 
with respect to a graph $G$, it is denoted by $\chol(G)$.

Note that if Mr. Paint writes down all the colours suggested for each vertex into a list and we should 
decide who has the winning strategy, we get an instance of off-line list colouring.
Both Schauz~\cite{cit:schauz} and Zhu~\cite{cit:zhu} noted that if $G$ is not $k$-choosable, it
cannot be $k$-paintable.
So, the choosability number provides a lower bound on the paintability
number. We get that $$\chi(G) \leq \ch(G) \leq \chol(G).$$ Schauz~\cite{cit:schauz} provided an example of
a graph and an assignment of erasers $L$ where Mr.~Paint has a winning strategy, i.e., the graph is not
$L$-paintable, however it has an (off-line) list colouring for any list assignment with lists of the respective sizes~(see Appendix~\ref{ap:winning-strategy}).
Zhu~\cite{cit:zhu} proved that the complete bipartite graphs $K_{6,q}$ for $q \geq 9$ are not 3-paintable, 
however both $K_{6,9}$
and $K_{6,10}$ are 3-choosable. Thus, there are graphs with choosability strictly smaller than paintability.

In 1994, Thomassen~\cite{cit:thomassen} showed that all planar graphs are 5-choosable. 
Schauz~\cite{cit:schauz} adapted this technique to the on-line list colouring model 
to show that every planar graph $G$ is 5-paintable.
In the same paper, Schauz also noted that
\textit{``$\ell$-paintability is stronger than the $\ell$-list-colourability ($\ell$-choosability),
but not by much. Although [\ldots] there is a gap between these two notions, most theorems about list colourability
hold for paintability as well.''}

In~\cite{cit:ohba}, Ohba investigated the classes of graphs where the choosability number
equals the chromatic number. He showed that if a graph is sufficiently dense,
namely, if $|V(G)| \leq \chi(G)+\sqrt{2\chi(G)}$, then $\chi(G) = \ch(G)$. As a strengthening
of the result, he conjectured\footnote{
Ohba's conjecture was proved by Noel, Reed and Wu~\cite{cit:ohba-proof} in 2014.}
that if $G$ is a~graph with $|V(G)| \leq 2\chi(G)+1$ then $\chi(G) = \ch(G)$.

Kim et al.~\cite{cit:kim} studied Ohba's conjecture for multipartite graphs in the on-line setting. They pointed
out that, unlike the off-line case, graphs $K_{2\star(k-1),3}$ (the complete multipartite graphs
with $k-1$ parts of size 2 and one part of size 3) are not chromatic choosable on-line and thus
adjusted the inequality:

\begin{conjecture}[Ohba's On-Line Conjecture~\cite{cit:kim}]
\label{con:ohba-online}
Let $G$ be a graph with $|V(G)| \leq 2\chi(G)$. Then, $\chi(G) = \chol(G)$.
\end{conjecture}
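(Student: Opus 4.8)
The plan is to reduce the general statement to complete multipartite graphs and then to build an explicit on-line strategy for Mrs.~Correct by induction on the number of colour classes. The reduction rests on two monotonicity facts. First, $\chol$ is non-decreasing under adding edges: a winning strategy on a supergraph transfers verbatim to a subgraph, since every kept (independent) set and every proper final colouring of the denser graph remains valid in the sparser one. Second, given any $G$ with $\chi(G)=k$, fix a proper $k$-colouring and add all edges between differently coloured vertices; the resulting complete $k$-partite graph $G^{+}$ has the same vertex set, chromatic number $k$, and satisfies $\chol(G)\le\chol(G^{+})$. Since $|V(G^{+})|=|V(G)|\le 2k$, proving $\chol(G^{+})=k$ yields $\chol(G)\le k$, and combined with $\chol(G)\ge\chi(G)=k$ this gives equality. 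Hence it suffices to prove $\chol\bigl(K_{n_1,\dots,n_k}\bigr)=k$ whenever $\sum_i n_i\le 2k$.

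For a complete multipartite target $G=K_{n_1,\dots,n_k}$ with $n_1\ge\cdots\ge n_k\ge 1$ the game simplifies, since every independent set lies inside a single part: in each round Mrs.~Correct keeps the new colour on some subset of one part and spends one eraser on every other attacked vertex. With $k-1$ erasers per vertex, each vertex may be refused at most $k-1$ times before it must be accepted. The natural attack is an induction on $k$ in which Mrs.~Correct aims to permanently retire one colour class using a single colour, reducing the instance to $K_{n_1,\dots,n_{k-1}}$ with one fewer available colour; the accounting one must then verify is that the budget $\sum_i n_i\le 2k$ degrades to $\sum\le 2(k-1)$, which is why a part of size at least two should be retired first. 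The constraint $\sum_i n_i\le 2k$ forces the average part size to be at most two, bounding the number of large parts, and the boundary example $K_{2\star(k-1),3}$ of Kim et al.~\cite{cit:kim}, which sits just outside the range at $|V|=2k+1$, identifies the precise configuration the strategy must never be forced into.

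The hard part will be the on-line adaptivity. In the off-line proof of Ohba's conjecture by Noel, Reed and Wu~\cite{cit:ohba-proof} one reasons globally about the whole list assignment at once, with a Hall-type counting argument that sees the entire instance; paintability denies Mrs.~Correct this view, because she must commit to each erasing decision before learning which vertices Mr.~Paint will attack next, and in particular she cannot elect to colour an entire part in one round since Mr.~Paint, not she, chooses the attacked set. The central difficulty is thus to convert the global counting argument into a potential (or ``defect'') function that Mrs.~Correct maintains round by round against an adaptive adversary, certifying that no vertex ever exhausts its erasers. An alternative is the on-line Alon--Tarsi / polynomial method of Schauz~\cite{cit:schauz}, bounding $\chol(K_{n_1,\dots,n_k})$ by the Alon--Tarsi number through a nonzero coefficient of the graph polynomial of the appropriate degree type; the obstacle there is to prove that this coefficient is nonzero exactly up to $\sum_i n_i=2k$ and vanishes beyond, matching the sharpness witnessed by $K_{2\star(k-1),3}$. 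I expect the parts of size three and larger, which push the instance to the tight boundary $|V|=2k$, to be the genuinely delicate case and the reason the full conjecture remains open, whereas the planar restriction---where the Four Colour Theorem bounds $\chi$ and leaves only finitely many critical configurations to handle---is within reach.
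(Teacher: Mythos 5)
The statement you set out to prove is presented in the paper as a \emph{conjecture} (Conjecture~\ref{con:ohba-online}), and the paper does not prove it in general --- it proves only the planar restriction (Theorem~\ref{thm:ohba}). Your proposal does not prove it either, and by its own admission: the two load-bearing steps are left open. Your reduction is sound and standard --- monotonicity of paintability under subgraphs is exactly Proposition~\ref{prop:carraher}(a), and completing a proper $k$-colouring to the complete $k$-partite graph $G^{+}$ preserves $\chi$ and the bound $|V|\le 2k$ --- so it would indeed suffice to show $\chol(K_{n_1,\dots,n_k})=k$ whenever $\sum_i n_i\le 2k$. But the induction ``retire one part with one colour'' is precisely where this plan stalls, and you correctly diagnose why without repairing it: Mr.~Paint, not Mrs.~Correct, chooses the presented set, so she can never arrange for an entire part to be offered a common colour; and after a round in which she keeps only a proper subset of a part, the residual instance is not of the form $K_{n_1,\dots,n_{k-1}}$ with a clean budget $\sum\le 2(k-1)$ --- the surviving vertices of the touched part have fewer erasers while the number of parts is unchanged, so the accounting step as stated fails. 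Neither the potential-function translation of the Noel--Reed--Wu counting argument~\cite{cit:ohba-proof} nor the Alon--Tarsi coefficient computation via Schauz's method~\cite{cit:schauz} is supplied; you name both as obstacles. A text that identifies its own missing central steps is a research programme, not a proof. (A small inaccuracy besides: you do not need, and should not expect, the Alon--Tarsi coefficient to \emph{vanish} beyond $\sum_i n_i=2k$; sharpness is witnessed by $K_{2\star(k-1),3}$ at $|V|=2k+1$ failing to be $k$-paintable~\cite{cit:kim}, which says nothing about the graph polynomial.)

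For comparison, what the paper actually establishes is the planar case, by essentially the route you gesture at in your closing sentence but do not execute: the Four Colour Theorem gives $\chi(G)\le 4$, hence $|V(G)|\le 8$; Proposition~\ref{prop:carraher} reduces each chromatic number to edge-maximal graphs on at most $2\chi$ vertices (for $\chi=4$, triangulations on $8$ vertices); and the cases are closed by combining the Kozik--Micek--Zhu theorem that Conjecture~\ref{con:ohba-online} holds for graphs of independence number at most~$3$~\cite{cit:ohba-online} with degeneracy bounds (Theorem~\ref{thm:degeneracy} and Corollary~\ref{cor:apollonian}) for the single exceptional $8$-vertex triangulation admitting an independent set of size~$4$. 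Your assessment of where the difficulty lies --- complete multipartite graphs with parts of size at least~$3$ at the tight boundary --- is accurate, but as a proof of the conjecture your submission has a genuine gap at its centre: the claimed induction on $k$ is never carried out, and its bookkeeping, as written, does not go through.
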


A step towards proving Conjecture~\ref{con:ohba-online} was made by
Kozik, Micek and Zhu~\cite{cit:ohba-online}, who showed that it holds for the
graphs with independence number of at most~3. Furthermore, they proved~\cite{cit:ohba-online} that
the conjecture holds for graphs with $|V(G)| \leq \chi(G) + \sqrt{\chi(G)}$.  

Additionally, there are
various other results concerning the choosability and paintability of specific graph classes, see
e.g.~\cite{cit:kim,cit:huang,cit:kierstead,cit:ohba-online}.

\textbf{Our contribution.} The main result of this paper is a proof of Conjecture~\ref{con:ohba-online}
for the class of planar graphs (Section~\ref{sec:ohba}). We also prove several 
results about paintability of classes of sparse graphs (cf. Section~\ref{sec:game-model}).
Lastly we consider other possible
models for the on-line list colouring problem (Section~\ref{sec:models}).

\section{Classical Model}
\label{sec:game-model}
In this section, we focus on the ``classical'' game-theoretic model
of list colouring introduced by Schauz~\cite{cit:schauz}. We investigate
and extend results about paintability of graphs with small number of edges.
In order to do so, we work with the recursive definition of the on-line
list colouring problem: The game starts on graph $G$ with assignment of erasers $\ell$.
Once the players finish a round, i.e., Mr.~Paint colours a set of vertices $V_P$
and Mrs.~Correct erases the colours from some of them, denote her move by $V_C \subseteq V_P$, the vertices
in $V_P \setminus V_C$ that remained coloured can be removed from the graph---those vertices are properly
coloured and Mr.~Paint will never use the same colour again. So, the game proceeds on
a graph $G' = G[(V(G) \setminus V_P) \cup V_C]$ with one less eraser for the vertices in
$V_C$. See Appendix~\ref{ap:game} for formal definition.

Let us begin with the following observation (proof is provided in Appendix~\ref{ap:game}):

\begin{lemma}
\label{lem:good-degree}
Let $G$ be a graph, $v$ a vertex of degree $k$, and $\ell$ an
assignment of e\-ra\-sers 
such that $\ell(v) \geq k$.
If $G - v$ is $\ell$-paintable, then $G$ is $\ell$-pain\-ta\-ble.
\end{lemma}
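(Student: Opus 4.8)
The plan is to show that Mrs.~Correct can win the game on $G$ by reducing essentially to a winning strategy on $G - v$. The key observation is that the only danger for Mrs.~Correct is running out of erasers on some vertex before it is correctly coloured. Since every vertex other than $v$ has the same number of erasers in both games, and $v$ has at least $\deg(v) = k$ erasers, I want to argue that $v$ can always be handled ``for free'': whenever Mr.~Paint colours $v$, Mrs.~Correct can decide whether to keep or erase the colour on $v$ based solely on the colours currently assigned to its neighbours, and she will never be forced to exhaust her erasers on $v$.

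First I would set up the strategy formally. Fix a winning strategy $\sigma$ for Mrs.~Correct on $G - v$ with the eraser assignment $\ell$ restricted to $V(G)\setminus\{v\}$. Now consider the game on $G$. In each round Mr.~Paint colours a set $V_P$ with some colour $c$. I would have Mrs.~Correct first consult $\sigma$ to decide which of the vertices in $V_P \setminus \{v\}$ to erase; this determines her move on all vertices except possibly $v$. For $v$ itself, the rule is: if $v \in V_P$, keep the colour $c$ on $v$ (i.e.\ do not erase) if and only if no neighbour of $v$ currently retains colour $c$ after this round's corrections; otherwise erase it. This guarantees that whenever $v$ keeps a colour, the partial colouring remains proper at $v$, so correctness is never violated.

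The main point to verify is that this rule never causes Mrs.~Correct to run out of erasers on $v$. The crucial counting argument is that Mrs.~Correct is forced to \emph{erase} on $v$ only when some neighbour of $v$ already holds the colour $c$ that Mr.~Paint just used. Since each colour is used by Mr.~Paint at most once across the whole game, and $v$ has exactly $k$ neighbours, the colour on $v$ can conflict with at most the $k$ distinct colours that eventually land permanently on its neighbours. Hence Mrs.~Correct is forced to erase at $v$ at most $k$ times over the entire game, and because $\ell(v) \geq k$ she always has an eraser available when an erasure at $v$ is required. Meanwhile, on $V(G) \setminus \{v\}$ she follows $\sigma$ verbatim, and the presence or absence of $v$ does not affect which colours those vertices receive from Mr.~Paint, so $\sigma$ continues to correctly colour $G - v$ using its allotted erasers.

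The step I expect to require the most care is the formal bookkeeping that the erasure count at $v$ is bounded by the number of distinct colours that permanently occupy its neighbours, rather than by the number of rounds in which $v$ is coloured. In particular I would need to argue that if $v$ is coloured but none of its currently-coloured neighbours share that colour, then Mrs.~Correct keeps it and $v$ leaves the game permanently coloured, so no further erasers are spent there. The subtlety is that a neighbour might only acquire the conflicting colour in a \emph{later} round; I would handle this by charging each forced erasure at $v$ to the neighbour that caused it and showing distinct forced erasures are charged to distinct neighbours (since once $v$ keeps a colour it is removed from the game, and before that each forced erasure corresponds to a distinct colour already present on a distinct neighbour). This establishes the bound of $k$ forced erasures, which together with $\ell(v) \geq k$ completes the argument.
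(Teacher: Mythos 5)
Your proof is correct and is in substance the same argument as the paper's: both run the winning strategy for $G-v$ verbatim on $V(G)\setminus\{v\}$ and observe that an eraser is spent at $v$ only in a round where some neighbour of $v$ permanently retains the round's colour (same-round conflicts are the only possible ones, since Mr.~Paint never reuses a colour), and that such a neighbour is then removed from the game, so this can happen at most $k$ times, which $\ell(v)\geq k$ covers. The only difference is bookkeeping: the paper packages the count as an induction on $k$ (after a forced erasure, $v$ has degree at most $k-1$ and at least $k-1$ erasers left), whereas you make the same count explicit via charging each forced erasure to a distinct removed neighbour; the two formulations are interchangeable.
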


\begin{theorem}
\label{thm:degeneracy}
Graphs with degeneracy $k \geq 0$ are $(k+1)$-paintable.
\end{theorem}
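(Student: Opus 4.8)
The plan is to proceed by induction on $|V(G)|$, peeling off one low-degree vertex at a time and invoking Lemma~\ref{lem:good-degree} at each step. First I fix the eraser assignment that matches the claim: to establish $(k+1)$-paintability I give every vertex exactly $k$ erasers, i.e.\ I take the constant assignment $\ell \equiv k$. By the paper's convention ($\ell$ erasers per vertex corresponds to $(\ell+1)$-paintability), showing that $G$ is $\ell$-paintable for this $\ell$ is precisely the statement to be proved. The structural fact I rely on is the standard one: a graph has degeneracy at most $k$ exactly when every subgraph contains a vertex of degree at most $k$. In particular degeneracy is hereditary, since every subgraph of an induced subgraph $G - v$ is also a subgraph of $G$; hence $G - v$ again has degeneracy at most $k$.

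For the base case I take $G$ to be a single vertex (or the empty graph). With $k \geq 0$ erasers this is trivially $(k+1)$-paintable: an isolated vertex carries no adjacency constraint, so whatever colour Mr.~Paint assigns, Mrs.~Correct simply keeps it and wins. For the inductive step I apply the definition of degeneracy to $G$ itself to obtain a vertex $v$ with $\deg_G(v) = d \leq k$. Since $G - v$ still has degeneracy at most $k$ and has one fewer vertex, the induction hypothesis gives that $G - v$ is $(k+1)$-paintable, i.e.\ $\ell$-paintable for the restriction of $\ell \equiv k$. I then apply Lemma~\ref{lem:good-degree} to the vertex $v$, reading the lemma's degree parameter as $d = \deg_G(v)$: its hypothesis $\ell(v) \geq d$ holds because $\ell(v) = k \geq d$, and $G - v$ is $\ell$-paintable, so the lemma yields that $G$ is $\ell$-paintable, i.e.\ $(k+1)$-paintable. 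This closes the induction.

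The only point that needs care is the bookkeeping. Lemma~\ref{lem:good-degree} is phrased for a vertex whose eraser count is at least its \emph{degree}, whereas degeneracy only guarantees degree at most $k$, possibly strictly less; since every vertex carries $k$ erasers, the hypothesis $\ell(v) \geq \deg_G(v)$ is met with room to spare, so the lemma applies without modification. Combined with keeping the off-by-one eraser/paintability convention straight, this is the whole of the subtlety: the remainder is a routine hereditary induction, and I do not anticipate a genuine obstacle.
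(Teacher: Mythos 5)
Your proposal is correct and follows essentially the same argument as the paper: induction on the number of vertices, extracting a vertex of degree at most $k$ via degeneracy, and applying Lemma~\ref{lem:good-degree} with the constant eraser assignment $\ell \equiv k$. Your explicit remarks on the heredity of degeneracy and the $\ell(v) \geq \deg_G(v)$ bookkeeping only make precise what the paper's shorter proof leaves implicit.
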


Theorem~\ref{thm:degeneracy} (see Appendix~\ref{ap:game} for the proof) states upper bounds for paintability of some
graph classes 
summarized by the following corollary:

\begin{corollary}
    \label{cor:apollonian}
 (a)~Forests are $2$-paintable.
(b)~Outer planar graphs are $3$-paintable.
(c)~Series-parallel graphs are $3$-paintable.
(d)~Apollonian networks are $4$-paintable.
(e)~$k$-regular graphs are $(k+1)$-paintable.
(f)~Planar graphs are $5$-paintable by inductive argument in~\cite{cit:schauz}. By degeneracy,
        they are trivially $6$-paintable.
\end{corollary}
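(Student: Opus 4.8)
The plan is to derive every part of the corollary from Theorem~\ref{thm:degeneracy} by establishing the degeneracy of the relevant graph class. Recall that a graph has degeneracy $k$ if every subgraph contains a vertex of degree at most $k$; equivalently, the vertices admit an ordering in which each vertex has at most $k$ neighbours preceding it. Once the degeneracy bound is in hand, Theorem~\ref{thm:degeneracy} immediately yields the stated paintability bound, so the whole task reduces to recalling standard structural facts about each class and feeding them into the theorem.

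First I would dispose of the classes of degeneracy at most $2$. A forest always has a leaf, and this property is inherited by subforests, so forests are $1$-degenerate and hence $2$-paintable, giving (a). Every outerplanar graph contains a vertex of degree at most $2$, and the outerplanar property is closed under taking subgraphs, so outerplanar graphs are $2$-degenerate and therefore $3$-paintable, giving (b). Series-parallel graphs are exactly the graphs with no $K_4$ minor (equivalently, treewidth at most $2$) and are well known to be $2$-degenerate, so the identical argument yields (c).

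Next I would treat (d) and (e). An Apollonian network is built by repeatedly inserting a vertex into a triangular face and joining it to the three surrounding vertices; the most recently inserted vertex then has degree exactly $3$, and deleting it returns a smaller Apollonian network, so such graphs are $3$-degenerate and Theorem~\ref{thm:degeneracy} gives $4$-paintability. For (e), a $k$-regular graph has every vertex of degree $k$, so in particular every subgraph contains a vertex of degree at most $k$; hence it is $k$-degenerate and $(k+1)$-paintable.

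Finally, for (f), Euler's formula shows that every planar graph has a vertex of degree at most $5$, a property preserved under subgraphs, so planar graphs are $5$-degenerate and therefore $6$-paintable by Theorem~\ref{thm:degeneracy}; the sharper bound of $5$-paintability is Schauz's inductive adaptation of Thomassen's theorem~\cite{cit:schauz}, which we merely cite rather than reprove. The only genuine content in any of these parts is verifying the degeneracy of each class, and since all of these are classical structural results, I expect the main obstacle to be mere bookkeeping rather than any substantive argument.
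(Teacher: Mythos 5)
Your proposal is correct and matches the paper's approach exactly: the corollary is stated there as an immediate consequence of Theorem~\ref{thm:degeneracy}, with each part following from the standard degeneracy bound for the class (forests $1$-degenerate, outerplanar and series-parallel $2$-degenerate, Apollonian networks $3$-degenerate, $k$-regular graphs $k$-degenerate, planar graphs $5$-degenerate) and with $5$-paintability of planar graphs cited to Schauz~\cite{cit:schauz}, just as you do. The only nitpick is your parenthetical characterization of series-parallel graphs as exactly the $K_4$-minor-free graphs (that class is slightly larger than the two-terminal series-parallel graphs the paper defines), but this does not affect the $2$-degeneracy argument.
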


\emph{Series-parallel} graphs are graphs with two distinguished vertices $s$ and $t$ called \emph{source} and
\emph{sink}. The class itself is defined inductively as follows: (1) an edge $(s,t)$ is a series-parallel graph;
and (2) any graph $G$ that can be obtained from two series-parallel graphs by a series or parallel composition
on theirs sources and sinks is a series-parallel graph.

It is easy to see that the class of series parallel graphs is a subclass of planar graphs and thus,
they are $5$-paintable. It is well-known that series-parallel graphs
are 2-degenerate, so they are even
$3$-paintable. 
We wish to offer an alternate \emph{inductive} argument which implies this statement.
The significance of this result is given by the following: while most of the techniques for list colouring off-line can
be transferred to the on-line setting, it is \emph{not always possible} for inductive arguments. 
Inspired by Thomassen's proof of 5-choosability of planar graphs~\cite{cit:thomassen}, and also by the proof of their 
5-paintability by Schauz~\cite{cit:schauz}, we prove a slightly stronger claim (the proof is provided in Appendix~\ref{ap:game}):

\begin{theorem}
\label{thm:sp}
Let $G=(V,E)$ be a series-parallel graph with source~$s$ and sink~$t$, and $\ell$ an assignment
of erasers such that $\ell(s) = 0$, $\ell(t) = 1$ and $\ell(v) = 2$ for any other vertex. Graph $G$ is $\ell$-paintable.
\end{theorem}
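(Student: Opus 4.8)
The plan is to induct on the series-parallel decomposition of $G$, following the recursive formulation of the game recalled above and in the spirit of Thomassen's and Schauz's arguments. It is worth noting at the outset that, since $\ell$ erasers at a vertex play the role of a list of size $\ell+1$, the hypothesis gives $s$ the strength of a pre-coloured vertex, $t$ the strength of a size-$2$ list, and every interior vertex the strength of a size-$3$ list; this is exactly the on-line analogue of the standard inductive set-up for $2$-degenerate (equivalently, $K_4$-minor-free) graphs. For the base case, $G$ is a single edge $st$ with $\ell(s)=0$ and $\ell(t)=1$, and I would simply inspect the three possible moves of Mr.~Paint directly. If he colours only $s$ (resp.\ only $t$), Mrs.~Correct keeps it and the remaining isolated vertex is trivially finished; if he colours both with one colour, she is forced to keep $s$, erases $t$ (spending its single eraser), after which $t$ is an isolated vertex with no erasers and is kept on its next colouring.

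For the inductive step I would write $G$ as the top-level series or parallel composition of two smaller series-parallel graphs $G_1,G_2$ and let Mrs.~Correct run the two winning strategies supplied by the induction hypothesis as subroutines: each time Mr.~Paint colours a set $V_P$ with a fresh colour, I split it into $V_P\cap V(G_1)$ and $V_P\cap V(G_2)$, feed each part to the corresponding subgame, and recombine the two responses. The only vertices common to both subgames are the shared terminals, so away from them the responses are independent and compose without interference. The source $s$ is painless: it carries $0$ erasers in $G$ and in both subgames, so it is forced to be kept the first time it is coloured, and this single forced decision is automatically consistent across the subgames; moreover every neighbour of $s$ has at least one eraser (only $s$ has none), so keeping $s$ never leaves a monochromatic edge that cannot be repaired by erasing the other endpoint.

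The real difficulty, which I expect to be the crux of the whole proof, is the \emph{shared sink} $t$ in the parallel composition: both subgames treat $t$ as a sink carrying one eraser, yet in $G$ there is only a single eraser available at $t$. One cannot run the two inherited strategies independently, because each may wish to spend an eraser on $t$, and a naive split of the budget would force one subgame to meet $t$ with no eraser at all---a configuration that is in general not paintable, since already a single edge with both endpoints eraser-free is lost. The same phenomenon, in milder form, appears at the merge vertex of a series composition, although there the interior budget of two erasers supplies the slack needed to pre-commit that vertex's colour to one side and then treat it as pre-coloured for the other.

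To overcome the sink obstacle I would \emph{synchronise} the two subgames at $t$. The key observations are that erasing a vertex is always a legal move while an eraser remains, and that once $t$ has been erased it carries no eraser in either subgame, so every later colouring of $t$ is a forced keep on both sides. Thus there is essentially one decision to coordinate---the first time $t$ is coloured---and I would resolve it by having Mrs.~Correct erase $t$ exactly once in $G$ and report this single erasure to \emph{both} subgames simultaneously, which is precisely what a forced keep at $s$ (or at an already-kept neighbour) compels when keeping $t$ is ruled out. The delicate point to verify is that, after this shared erasure is imposed on a subgame that might have preferred to keep $t$, that subgame still occupies a winning position. I expect to secure this by proving a slightly strengthened induction hypothesis---one asserting that Mrs.~Correct wins even when an external agent may, in a single round, legally dictate the keep/erase decision at the sink---so that the synchronised erasure stays inside both winning regions. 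Establishing this strengthened statement, and checking that it is preserved by both compositions (with the source and the series merge vertex handled as above), is the step I would expect to absorb most of the work.
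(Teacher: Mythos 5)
Your high-level plan---induct on the series-parallel decomposition, run the two inherited winning strategies as subroutines, and strengthen the induction hypothesis so that the shared terminals compose---is exactly the paper's architecture, and you correctly identify the shared sink in the parallel composition as the crux. However, the particular strengthening you propose is false, so the crucial step fails. Your strengthened hypothesis asserts that Mrs.~Correct still wins even when an external agent may once (legally) dictate the keep/erase decision at the sink. Already the base case refutes it: let $G$ be the single edge $(s,t)$ with $\ell(s)=0$, $\ell(t)=1$, let Mr.~Paint colour $\{t\}$ alone, and let the agent dictate ``erase'' (a perfectly legal erasure at that moment). Afterwards $s$ and $t$ are adjacent, uncoloured and both eraser-free, and the move $\{s,t\}$ loses the game. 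This failure propagates into your synchronisation scheme: take $G$ to be the parallel composition of the edge $(s,t)$ with the path $s$--$u$--$t$, and let Mr.~Paint first colour $\{t,u\}$. The path subgame may legitimately answer by erasing $t$ and keeping $u$ (a winning reply within the path, since deleting $u$ leaves $s$ and $t$ isolated there), while the edge subgame, seeing only $\{t\}$, must keep $t$. Your rule then erases $t$ in $G$ and reports the erasure to the edge subgame, and Mr.~Paint wins outright by next colouring $\{s,t\}$. The underlying problem is that your scheme can be forced to spend $t$'s unique eraser while $s$ is still uncoloured, which is precisely the situation that the edge $(s,t)$ cannot survive; flipping the dictate to ``keep'' in such rounds is not covered by any lemma you state and is not obviously sound either.

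The paper strengthens the hypothesis in a different, workable direction: it proves that $G' = G-(s,t)$ (the graph with the edge $(s,t)$ deleted, if present) is $\ell'$-paintable with $\ell'(s)=\ell'(t)=0$ and two erasers on every interior vertex. With \emph{zero} erasers on both terminals, the decision at a shared terminal is forced (always keep) in every subgame, so the inherited strategies never disagree there and their good sets simply union---this is your own observation that forced keeps are automatically consistent, pushed onto both terminals at once. In the series case the merge vertex $t_1=s_2$ has zero erasers in each subgame but two in $G$, supplying the patching budget for the edges $(s_1,t_1)$ and $(s_2,t_2)$. The single eraser that the original $\ell$ assigns to $t$ is then reserved exclusively for the one round in which $s$ and $t$ are painted together, repairing the only conflict the deleted edge $(s,t)$ can cause; after that round $s$ is permanently coloured with a never-reused colour, so no second conflict at that edge can arise. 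In short, the missing idea is to take $t$'s eraser away from the subgames entirely (proving the eraser-free-terminal statement for $G-(s,t)$) and dedicate it to the $(s,t)$ edge, rather than trying to synchronise two subgames that each believe they own it.
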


\section{Ohba's On-Line Conjecture}
\label{sec:ohba}

In this section, we prove the on-line version
of the Ohba's conjecture (cf.~Conjecture~\ref{con:ohba-online})
for planar graphs (see Theorem~\ref{thm:ohba}).
Our approach utilizes the following proposition which appears in~\cite{cit:carraher} (see also Appendix~\ref{ap:game}):

\begin{proposition}[Carraher et al.~\cite{cit:carraher}]
\label{prop:carraher}
If $G$ is a graph and
    $\ell$ an assignment of erasers to its vertices,
    the following holds for the game model of on-line list colouring.
    \begin{enumerate}[(a)]
    \item If $G$ is $\ell$-paintable, every subgraph $H$ of $G$ is $\ell$-paintable.
    \item If $\ell$ assigns every vertex $v$ of degree $k$ at least $k$ erasers,
    $G$ is $\ell$-paintable if and only if $G-v$ is $\ell$-paintable.
    \end{enumerate}
\end{proposition}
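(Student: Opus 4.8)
The plan is to derive part~(b) from part~(a) together with the already-established Lemma~\ref{lem:good-degree}, and to prove part~(a) by induction on the progress of the game on~$G$. For part~(b), note first that $G-v$ is a subgraph of $G$; hence if $G$ is $\ell$-paintable, part~(a) immediately yields that $G-v$ is $\ell$-paintable, which is the ``only if'' direction. The ``if'' direction is exactly Lemma~\ref{lem:good-degree}: when $\ell(v)\ge\deg(v)=k$, paintability of $G-v$ gives paintability of $G$. So the whole content lies in part~(a).

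For part~(a) the intuition is that passing to a subgraph can only help Mrs.~Correct: deleting edges relaxes the requirement that the retained (kept) vertices form an independent set, and deleting vertices simply removes demands. I would make this precise with a strategy-stealing argument, phrased cleanly as an induction on the potential $\Phi(G,\ell)=|V(G)|+\sum_{v\in V(G)}\ell(v)$. Each round of the game strictly decreases $\Phi$ (each round either deletes at least one vertex or spends at least one eraser), so $\Phi$ bounds the finite length of the game and is a legitimate induction parameter; the base case $\Phi=0$ is the empty graph, which is trivially paintable.

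The inductive step goes through the recursive definition of paintability. Fix a subgraph $H\subseteq G$ with $V(H)\subseteq V(G)$ and $E(H)\subseteq E(G)$, and assume $G$ is $\ell$-paintable. To witness that $H$ is $\ell$-paintable I consider an arbitrary Mr.~Paint move in the $H$-game, i.e.\ a nonempty set $V_P\subseteq V(H)$, and mirror it as the move $V_P$ in the $G$-game. Since $V_P\subseteq V(H)\subseteq V(G)$ and $G$ is $\ell$-paintable, there is a valid reply $V_C\subseteq V_P$ for which the retained set $R=V_P\setminus V_C$ is independent in $G$, every $v\in V_C$ has $\ell(v)\ge 1$, and the residual graph $G'=G[(V(G)\setminus V_P)\cup V_C]$ is $\ell'$-paintable, where $\ell'=\ell-1$ on $V_C$ and $\ell'=\ell$ elsewhere. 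I then play the same $V_C$ in the $H$-game. This is legal there for two reasons: the eraser budget is identical (the assignment $\ell$ is the same graph-wide), and, the key point, because $E(H)\subseteq E(G)$ the set $R$, being independent in $G$, is also independent in $H$, so keeping $R$ leaves a proper partial colouring of $H$.

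It remains to close the induction. After the mirrored round the $H$-game continues on $H'=H[(V(H)\setminus V_P)\cup V_C]$ with erasers $\ell'$. Using $V(H)\subseteq V(G)$ and $E(H)\subseteq E(G)$ one checks directly that $V(H')\subseteq V(G')$ and that every edge of $H'$ is an edge of $G'$, i.e.\ $H'\subseteq G'$; since $\Phi(G',\ell')<\Phi(G,\ell)$, the induction hypothesis applied to the pair $H'\subseteq G'$ (with $G'$ being $\ell'$-paintable) shows $H'$ is $\ell'$-paintable. As $V_P$ was arbitrary, every Mr.~Paint move in the $H$-game admits a reply leading to a paintable residual graph, so $H$ is $\ell$-paintable, completing part~(a). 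The only place the argument can slip is maintaining the correspondence between the two games across rounds, namely verifying that the residual subgraph relation $H'\subseteq G'$ is preserved and that reusing $V_C$ never overdraws Mrs.~Correct's erasers; everything else is the bookkeeping of the recursion.
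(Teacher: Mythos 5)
Your proof is correct and follows essentially the same route as the paper: part~(a) is the paper's strategy-mirroring argument (``pursue the same strategy on $H$, ignoring removed vertices and edges''), which you merely formalize as an induction on the decreasing potential $\Phi(G,\ell)=|V(G)|+\sum_v \ell(v)$ over the recursive definition of paintability. For part~(b) you cite Lemma~\ref{lem:good-degree} for the ``if'' direction where the paper re-derives it inline, but the paper's inline argument is word-for-word the proof of that lemma, so this is the same approach, just with the redundancy acknowledged.
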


\begin{lemma}
\label{lem:connected}
Let $G$ be a graph and $\ell$ be an assignment of erasers such that Mr.~Paint has a winning strategy.
The winning strategy can be pursued by always selecting a set $P$ such that $G[P]$
is connected.
\end{lemma}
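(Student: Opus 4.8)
The plan is to show that any winning strategy for Mr.~Paint can be converted into one that only ever colours connected sets, by having the connected-set strategy \emph{simulate} the arbitrary-set strategy. The crucial observation is that colouring a disconnected set all at once with a single new colour gives Mrs.~Correct no more freedom than colouring its connected components one after another in separate rounds, so splitting a move can only help Mr.~Paint.

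First I would record the relevant factorisation property. Suppose Mr.~Paint colours a set $P$ whose induced subgraph $G[P]$ has connected components $P_1,\dots,P_m$. Because each component is a component of $G[P]$, there are no edges of $G$ between distinct $P_i$ and $P_j$, and because the new colour $c$ cannot conflict with any previously used colour, a response $V_C\subseteq P$ of Mrs.~Correct is legal precisely when $P\setminus V_C$ is independent and she still has erasers for every vertex of $V_C$. Writing $V_C^i=V_C\cap P_i$, the set $P\setminus V_C$ is independent if and only if each $P_i\setminus V_C^i$ is independent, so the legal responses to $P$ are exactly the tuples $(V_C^1,\dots,V_C^m)$ whose $i$-th entry is a legal response to $P_i$ alone; moreover the graph remaining after the round, namely $G[(V(G)\setminus P)\cup V_C]$ with the erasers on $V_C$ decremented, depends only on this tuple. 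Consequently the state reached by colouring $P$ in one round coincides with the state reached by colouring $P_1,\dots,P_m$ in $m$ consecutive rounds: throughout such a sequence the vertices touched when playing $P_i$ lie in a component disjoint from the others, so removing or erasing earlier vertices never alters the internal edges of a later $P_i$ and hence never changes its independence constraint.

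Next I would use this to build the connected-set strategy. Given a winning strategy $\sigma$ for Mr.~Paint, define $\sigma'$ to maintain a virtual play of $\sigma$: whenever $\sigma$ prescribes a set $P$ with components $P_1,\dots,P_m$, the strategy $\sigma'$ plays $P_1,\dots,P_m$ in $m$ real rounds, each colouring a connected set, then assembles Mrs.~Correct's real responses $V_C^1,\dots,V_C^m$ into the single response $V_C=\bigcup_i V_C^i$ and feeds it back to $\sigma$. By the factorisation property the virtual and real games occupy identical states after each completed virtual move. Whatever tuple of responses Mrs.~Correct produces in the real game is a legal combined response in the virtual game, against which $\sigma$ wins; and if Mrs.~Correct is ever unable to respond to some $P_i$ for lack of erasers, then she would also have been unable to respond to the combined $P$, so no intermediate round lets her survive a position that $\sigma$ would have won. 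Hence $\sigma'$ wins against every line of play, using only connected sets.

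Finally I would confirm termination, which makes the simulation well-defined and finite: every round either removes at least one vertex (when $V_C\subsetneq P$) or spends at least one eraser (when $V_C\neq\emptyset$), and at least one of these always occurs, so the number of rounds is bounded by $|V(G)|+\sum_v \ell(v)$ in both games; thus each virtual move expands into finitely many real moves and the real game reaches the same terminal winning state as the virtual one. The main obstacle is not any single calculation but the bookkeeping of the simulation, namely verifying that splitting a move hands Mrs.~Correct no extra power and that the real game neither stalls nor lets her survive a round $\sigma$ would win. Both follow from the single observation that, since $c$ is a fresh colour, the per-component independence constraints are independent of one another.
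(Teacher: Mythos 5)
Your proposal is correct and rests on exactly the same key observation as the paper's proof: playing two mutually non-adjacent sets in consecutive rounds with responses $C_1,C_2$ reaches the same position as playing their union in one round with response $C_1\cup C_2$, so any response Mrs.~Correct assembles against the split moves would also answer the combined move. You package this as a global strategy simulation (splitting every move into all its components, with explicit factorisation of legal responses and a termination check), whereas the paper iterates a local split-one-component-off exchange and derives a contradiction; the substance is the same, with yours merely more explicit about the bookkeeping.
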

\begin{proof}
Let $P_1, P_2$ be two subsequent moves of Mr.~Paint
such that there are no edges between the vertices in $P_1$ and $P_2$,
and $C_1,C_2$ are arbitrary respective moves of Mrs.~Correct.
Denote the graph obtained by playing the moves by $H$. Observe that the graph obtained
by playing moves $P_1 \cup P_2$ and $C_1 \cup C_2$ by Mr.~Paint and Mrs.~Correct
is equal to $H$.

So, let $P'$ be a move in Mr.~Paint's winning strategy such that $G[P']$ is disconnected.
Let $H$ be a maximal connected component of $G[P]'$. Select $P_1 := V(H)$ and
$P_2 = P' \setminus P_1$, and replace the move $P'$ with two subsequent moves $P_1, P_2$ (in this order).
Assume that Mrs.~Correct has a winning strategy by moves $C_1, C_2$ after this modification. By the
observation above, Mrs.~Correct's response $C_1 \cup C_2$ is a winning response to Mr.~Paint's move $P'$.
This is a contradiction. Repeat this argument to produce a winning strategy of Mr.~Paint such that he
always colours a connected induced subgraph of $G$.
\qed
\end{proof}

\begin{theorem}
\label{thm:ohba}
Let $G$ be a planar graph with $|V(G)| \leq 2\chi(G)$. Then,
$\chi(G) = \chol(G)$.
\end{theorem}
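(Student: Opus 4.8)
The plan is to establish the nontrivial direction $\chol(G) \le \chi(G)$, since $\chi(G) \le \chol(G)$ always holds; writing $k := \chi(G)$, this means showing that $G$ is $k$-paintable. The Four Colour Theorem gives $k \le 4$, so with $|V(G)| \le 2k$ we are dealing with graphs on at most $8$ vertices. I would prove, by induction on $k$ and, for each fixed $k \in \{1,2,3,4\}$, by a secondary induction on $|V(H)|$, the statement: \emph{every planar $H$ with $\chi(H) \le k$ and $|V(H)| \le 2k$ is $k$-paintable}; applying it with $k = \chi(G)$ to $H = G$ then yields the theorem. Throughout the secondary induction the eraser budget stays at $k-1$ on every vertex, so the game always has ``list size'' $k$, and the single reduction tool is Proposition~\ref{prop:carraher}(b): any vertex $v$ with $\deg(v) \le k-1$ may be deleted, because $H$ is $k$-paintable iff $H-v$ is, and $H-v$ is again planar with $\chi(H-v) \le k$ and strictly fewer than $2k$ vertices.

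The induction splits on the independence number. Suppose first $\alpha(H) \ge 4$. Fix a maximum independent set $I$ and set $O := V(H) \setminus I$. Since $I$ is independent, every edge meeting $I$ lands in $O$, so $\sum_{v \in I}\deg(v) = e(I,O)$, and as the bipartite graph between $I$ and $O$ is simple and planar we have $e(I,O) \le 2|V(H)| - 4 \le 4k - 4$. Averaging over the $|I| = \alpha \ge 4$ vertices of $I$ produces a vertex $v \in I$ with $\deg(v) \le (4k-4)/\alpha \le k-1$; this $v$ is peelable by Proposition~\ref{prop:carraher}(b), and the secondary induction hypothesis finishes the case. The same inequality shows that the peeling never stalls for a bad reason: if some intermediate $H$ has minimum degree at least $k$, then $\alpha(H) \le 3$, for an independent set of size $\ge 4$ would force $e(I,O) \ge k\alpha \ge 4k$ against the planar bound $e(I,O) \le 4k-4$.

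It remains to treat $\alpha(H) \le 3$. If $\chi(H) = k$, then $|V(H)| \le 2k = 2\chi(H)$ and the theorem of Kozik, Micek and Zhu~\cite{cit:ohba-online} applies verbatim, giving $\chi(H) = \chol(H)$ and hence $k$-paintability. If instead $\chi(H) = k' < k$ and $|V(H)| \le 2k'$, the \emph{outer} induction (the statement at level $k'$) shows $H$ is $k'$-paintable, which implies $k$-paintable because enlarging a uniform budget can only help Mrs.~Correct. These two subcases are routine, and together with the peeling step above they close the induction on all configurations except one.

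The main obstacle is exactly the remaining gap between the fixed budget $k$ and the chromatic number of the reduced graph: peeling a \emph{critical} vertex (one of degree precisely $k-1$) can lower the chromatic number, and then at an unpeelable $H$ with $\chi(H) = k' < k$ the Kozik--Micek--Zhu hypothesis $|V(H)| \le 2\chi(H)$ may fail while we still only know $|V(H)| \le 2k$. Unwinding the parameters, for $k \le 3$ this residue is either empty (for $k=2$ the case $\alpha \ge 4$ cannot occur, and for $k=3$ a bipartite planar graph of minimum degree $\ge 3$ on $\le 6$ vertices is excluded by the planar bound) or consists of tiny bipartite planar graphs on $\le 5$ vertices that are directly $3$-paintable. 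The genuine difficulty is therefore localised to $k=4$: the planar graphs with $\chi = 3$, $\alpha \le 3$, minimum degree $\ge 4$, and $7$ or $8$ vertices. I expect the real work to be pinning down this finite family and showing each member is $4$-paintable by an explicit strategy for Mrs.~Correct---where Lemma~\ref{lem:connected} lets us assume Mr.~Paint plays connected sets and cuts the analysis down---or, alternatively, arguing from planarity that such a graph can never arise as an induced subgraph obtained by the peeling above from a $4$-chromatic planar $G$.
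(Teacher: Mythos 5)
Your reduction machinery is sound as far as it goes: the bipartite planar bound $e(I,O)\le 2|V(H)|-4\le 4k-4$ does yield a vertex of degree at most $k-1$ whenever $\alpha\ge 4$, Proposition~\ref{prop:carraher}(b) licenses the peeling, and the subcases $\alpha\le 3$ with $\chi(H)=k$ (via~\cite{cit:ohba-online}) and $|V(H)|\le 2\chi(H)$ (via the outer induction plus budget monotonicity) are fine. But the proposal does not prove the theorem, and you say so yourself: the entire case $k=4$ with residue $\chi(H)=3$, $\alpha(H)\le 3$, $\delta(H)\ge 4$, $|V(H)|\in\{7,8\}$ is left as ``I expect the real work to be\ldots''. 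That residue is where the actual content of the theorem lives, and it is not vacuous for your strengthened induction statement: the hexagonal bipyramid (two hubs joined to every vertex of a $C_6$, hubs non-adjacent) is a planar graph on $8$ vertices with $\chi=3$, $\alpha=3$, minimum degree $4$ and degeneracy $4$, so neither Theorem~\ref{thm:degeneracy} nor~\cite{cit:ohba-online} (which needs $|V|\le 2\chi=6$) disposes of it. Note that this graph is not even an instance of the theorem ($8>2\chi$); your decision to peel with a \emph{fixed} budget $k$ while letting $\chi$ drop forces you to prove a strictly stronger statement (``every planar graph on at most $8$ vertices is $4$-paintable'') than the one claimed, and it is exactly this self-inflicted surplus that your sketch cannot close.

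The paper's proof avoids the chromatic-number drop by moving in the opposite direction: instead of peeling down, it completes up. For $\chi(G)=4$ it embeds $G$ in a planar \emph{triangulation} $T$ on $8$ vertices; since $T\supseteq G$, the chromatic number stays pinned at $4$ (it cannot exceed $4$ by the Four Colour Theorem), so the hypothesis $|V(T)|\le 2\chi(T)$ of the Kozik--Micek--Zhu result is preserved, and Proposition~\ref{prop:carraher}(a) transfers paintability back to $G$. A counting/wheel analysis ($T$ has $18$ edges, hence a vertex of degree $5$, $6$ or $7$; examine the wheel around it) then shows every such triangulation has $\alpha(T)\le 3$, with a single exceptional Apollonian network that is $3$-degenerate and hence $4$-paintable by Corollary~\ref{cor:apollonian}. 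The cases $\chi\in\{2,3\}$ are handled by direct small-graph analysis (for $\chi=2$, explicit strategies on $K_{1,3}$ and $C_4$; for $\chi=3$, a cycle-parity split giving either $\alpha\le 3$ or $2$-degeneracy), roughly matching your treatment. To repair your argument you would need either to show the $k=4$ residue can never arise by peeling from a genuinely $4$-chromatic $G$ (your own suggested alternative, which amounts to tracking $\chi$ through the peeling), or simply to adopt the paper's triangulation-completion step, which makes the residue disappear at the outset.
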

\begin{proof}
Let $G$ be a connected planar graph with $|V(G)| \leq  2\chi(G)$.
Recall that if $G$ has independence number of at most $3$, the statment holds~\cite{cit:ohba-online}.
By the Four Colour Theorem, $\chi(G) \leq 4$, so we proceed in four cases based on $\chi(G)$.

\noindent\textbf{Case 1: $\chi(G) = 1$.} If the chromatic number
is $1$, the graph has no edges. Thus, Mrs.~Correct does not need
any erasers, and the graph is $1$-paintable.

\noindent\textbf{Case 2: $\chi(G) = 2$.} If the chromatic number
is $2$, the graph is bipartite. One should consider planar graphs
of size up to $4$ vertices. In fact, using Proposition~\ref{prop:carraher},
it is sufficient to prove the claim for the complete bipartite graphs
on $4$ vertices.
There are two\footnote{Note that $K_{0,4}$ is not connected, so it is both $1$-chromatic and $1$-paintable.} possible distributions of
vertices into the two partitions, so there are precisely two such graphs: $K_{1,3}$
and $K_{2,2} = C_4$. Graph $K_{1,3}$ is a tree, so it is $2$-paintable (cf.~Corollary~\ref{cor:apollonian}).

For $C_4 = (v_1,v_2,v_3,v_4)$, we consider three cases of Mr.~Paint's first move. If he colours one vertex only,
no erasers are used up and the game continues on a tree. So, Mrs.~Correct has winning strategy.
If Mr.~Paint colours more than two vertices, two of them, without loss of generality $v_1$ and $v_3$, are not adjacent to
each other. Mrs.~Correct leaves $v_1$ and $v_3$ coloured and uses erasers for the rest. Then the game continues
on a graph with two isolated vertices where no erasers are needed. So, the only option of Mr.~Paint is to
initially colour two adjacent vertices, say $v_1$ and $v_2$. Mrs.~Correct uses eraser for one of them, say $v_2$. The game continues
on a path $(v_2,v_3,v_4)$ where $v_2$ has no erasers, and the remaining vertices have each one eraser available.
One can easily see that Mrs.~Correct wins the game here too.

\noindent
\textbf{Case 3: $\chi(G) = 3$.} Applying Proposition~\ref{prop:carraher}, it is sufficient
to show that graphs with 6 vertices and chromatic number
$3$ need to be considered.
Graphs with no odd cycle are bipartite, thus $2$-chromatic. We
divide the case into two subcases: when $G$ contains a cycle of length $3$ and $5$.

If $G$ contains a cycle of length $5$, any independent set contains at most
two vertices of this cycle. Together with the last vertex, the independece
number of $G$ is at most $3$ and thus, the claim holds.

If $G$ contains a cycle $C$ of length $3$, any independent set contains at most
one vertex of this cycle. As $|V(G)| = 6$, the independence number is at most $4$,
in which case the vertices not in $C$, call them $u,v,w$, must be part of the independet set.
Assume that this is the case (otherwise the claim again holds).
Observe that $u,v,w$ are connected to at most two vertices of $C$ otherwise
they cannot be all members of the independent set. Hence, their degrees are at most
two, and $G$ is $2$-degenerate. Hence, $G$ is $3$-paintable by Theorem~\ref{thm:degeneracy}.

\noindent\textbf{Case 4: $\chi(G) = 4$.}
As $\chi(G)$ cannot be more than $4$, if one can prove
the claim for triangulated graphs on $8$ vertices, it holds for all the 4-chromatic
planar graphs on 8 vertices by Proposition~\ref{prop:carraher}.

Let $v$ be a vertex in $G$. As $G$ is triangulated, the neighbours $N(v)$ of $v$ form
a cycle $C$. The subgraph $C$ togehter with $v$ and its attachments to the vertices in $C$
is called a \emph{wheel}. Vertex $v$ is called a \emph{hub} of this wheel and $C$ is its \emph{rim}.
In order to show that the independence number of $G$ is at most $3$, we analyse $G$ based on
its wheels.

Observe that $G$, being a planar triangulation on $8$ vertices, contains precisely
$12$ triangular faces and $18$ edges. The maximum size of an independent set is at most $4$ as every
triangular face can contribute at most one vertex. A wheel of size $k$ contains
$2(k-1)$ edges. Furthermore, the maximum independence
number of such a wheel is $\lfloor \frac{k-1}{2}\rfloor$ and it cannot include
the hub (the only independent set which includes the hub has size $1$ as it cannot
include any other vertex).

The sum of vertex degrees in $G$ is $18 \cdot 2 = 36$. Hence, $G$ must contain
a vertex of degree at least $5$. If $G$ contains a vertex of degree $5$, the wheel
around this vertex has size $6$, the rim is a $5$-cycle and the wheel has $10$ edges.
Also, $G$ has two vertices $u,v$ that do not belong to this wheel.
In order to obtain an independent set of size $4$, both $u$ and $v$ have to be
added into an independent set $S$ of size $2$ found in the wheel. To fill the remaining
$8$ edges into $G$, at least one of $u,v$ must have degree at least $4$.
Thus, it has to be attached to one of the vertices that are already in $S$.
Hence, the independence number of $G$ is at most~$3$.

If $G$ contains a vertex $v$ of degree $6$, in order to construct an independent set of size $4$,
one has to find an independent set $S$ of size $3$ in the wheel around $v$ and fill in
the additional vertex $u$ that is not part of the wheel. Refer to Fig.~\ref{fig:ap-triangulation}.
Without loss of generality, on can select vertices $a,c,e$ into $S$. As $u$ needs to
belong to $S$ as well, it cannot be connected to neither of $a,c,e$. As $G$ is triangulated,
it must contain edges $(f,b), (b,d), (d,f)$ that enclose the wheels around $a,c,e$ respectively.
Then $u$ must be connected to $b,d,f$. Such a graph is an Apollonian network: starting with triangle
$b,d,f$, one subdivides its inner face by $v$ obtaining an embedded graph isomorphic to $K_4$,
and then subdivides all of its triangular faces by vertices $a,c,e,u$. By Corollary~\ref{cor:apollonian},
such a graph is $4$-paintable.

If $G$ contains a vertex of degree $7$, all the vertices in the graph form
a wheel of size $8$ around this vertex. Hence, the maximum independent set has size $3$.
So, in all subcases of Case 4, graph $G$ has independence at most $3$ and the claim holds, or it is the graph depicted
in Fig.~\ref{fig:ap-triangulation}, which is $4$-paintable by degeneracy. \qed
\end{proof}

\begin{figure}[tb]
\centering
\includegraphics[width=.35\textwidth]{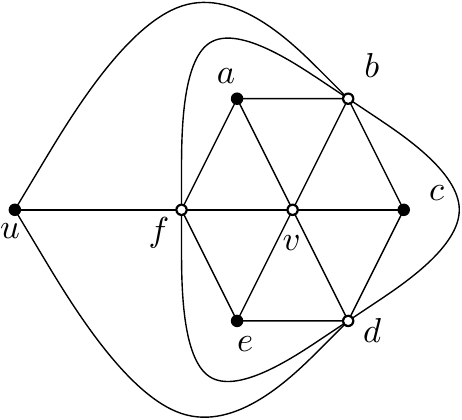}
\caption{Illustration for Case 4 of proof of Theorem~\ref{thm:ohba}. The
graph has independent set of size $4$, but it is $3$-degenerate, so $4$-paintable.}
\label{fig:ap-triangulation}
\end{figure}

\section{Different List Colouring Models}
\label{sec:models}
One of the initially suggested motivations for investigating the list colouring
problem on-line was analysing big graphs on a single server with limited
memory. A large disadvantage of the game model of Mr.~Paint and Mrs.~Correct is that it
does not contribute to lowering the memory requirements. In this section, we suggest
and investigate alternate models for the list colouring problem.

The memory requirements of the game 
model are given by two facts:
\begin{itemize}
    \item the model requires that the graph is known to both the players; and
    \item the vertices are processed in clusters based on the colours that they share in their lists.
\end{itemize}
One could hope in eliminating the former requirement by giving a smart strategy
for Mrs.~Correct which will be ``good enough'' and yet will not require Mrs.~Correct
to know the entire graph. Unfortunately, there is no hope in overcoming the latter
obstacle. If a large number of vertices in the graph (possibly all of them) share
a single colour in their lists, they may have to be all processed (i.e., loaded into
the memory) within a single round of the game at some point. Moreover, there may
be multiple colours shared among many vertices, so there may be multiple memory-expensive
rounds within the game.

\subsection{True On-Line List Colouring Model}
An alternative to the ``classical'' game model for list colouring is what we call the \emph{true on-line list colouring model}.
In this model, the adversary reveals vertices of a graph $G$ in a sequence $\sigma = (v_1, v_2, v_3, \ldots, v_n)$ where every vertex $v_i$
is revealed with a list of admissible colours $L(v_i)$ and edges $(v_i,v_j)$ where
$j \leq i$. The on-line algorithm $\mathcal{A}$ has to assign a colour $c \in L(v_i)$ to every vertex $v_i$
immediately when the vertex appears. Additionally, it can re-assign a colour for any other vertex $v_j, j < i$.
At any given moment, the colouring of the graph maintained by $\mathcal{A}$ must be proper, i.e., no
two adjacent vertices are assigned the same colour. Assigning a colour to a vertex (both the initial assignment
and re-colouring) has an associated cost $1$. The \emph{cost} of $\mathcal{A}$ on sequence $\sigma$, denoted by $\mathcal{A}(\sigma)$ is the total
number of colour assignments that $\mathcal{A}$ performs in order to colour the graph. If
$G$ does not have an $L$-colouring, we set the cost $\mathcal{A}(\sigma) = \infty$.

The notion of cost in the true on-line list colouring model
naturally captures the cost of executing an on-line algorithm:
whenever a vertex needs to be coloured or re-coloured, it has
to be loaded into memory. Optimally, every vertex appears in the
memory exactly once, i.e., the algorithm will require no re-colourings.
The following theorem shows that even though this model exhibits
potential to reduce the memory traffic, it is not always the case.

\begin{theorem}
    \label{thm:no-competitive}
    There is no deterministic competitive on-line algorithm for the list
    colouring problem under the true on-line model.
\end{theorem}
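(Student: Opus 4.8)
The plan is to prove the statement by an adversary argument: for every deterministic online algorithm $\mathcal{A}$ and every constant $c$ I would construct a request sequence $\sigma$ on which $\mathcal{A}(\sigma) > c \cdot \mathrm{OPT}(\sigma)$. Since the offline optimum colours each vertex exactly once, $\mathrm{OPT}(\sigma) = |V(G)|$ on every list-colourable instance, so it suffices to force $\mathcal{A}$ into a super-linear number of colour (re-)assignments while keeping the revealed graph list-colourable after every request. I would use a complete bipartite instance $K_{k,k-1}$. First reveal an independent set of $k$ \emph{client} vertices $v_1,\dots,v_k$, each with the full list $\{1,\dots,k\}$; $\mathcal{A}$ must colour them somehow. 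Then reveal, one at a time, $k-1$ \emph{blocker} vertices $z_1,\dots,z_{k-1}$, each adjacent to all clients and carrying a singleton list $\{c_j\}$, where the colour $c_j$ is chosen adaptively.

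The heart of the argument is the choice of $c_j$. When $z_j$ is about to be revealed, exactly $j-1$ colours have been permanently forbidden on the clients (the singletons of $z_1,\dots,z_{j-1}$), so the $k$ clients occupy only the remaining $k-j+1$ colours; by the pigeonhole principle some available colour carries at least $k/(k-j+1)$ clients, and the adversary lets $c_j$ be that colour. Revealing $z_j$ with list $\{c_j\}$ pins $z_j$ to $c_j$ and conflicts with every client currently coloured $c_j$, each of which $\mathcal{A}$ must recolour. Hence step $j$ forces at least $k/(k-j+1)$ recolourings, and summing yields $\sum_{j=1}^{k-1} k/(k-j+1) = k(H_k-1) = \Theta(k\log k)$ forced recolourings, on top of the $k+(k-1)$ initial assignments.

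Next I would verify that $\sigma$ never traps $\mathcal{A}$ with cost $\infty$. After all $k-1$ blockers are revealed, exactly one colour $c^\star$ is never used as a singleton, so every client may legally take $c^\star$ while each blocker takes its forced colour; the final graph $K_{k,k-1}$ is therefore list-colourable and $\mathrm{OPT}(\sigma) = 2k-1$. Consequently $\mathcal{A}(\sigma) \ge k(H_k-1) + (2k-1)$, whence $\mathcal{A}(\sigma)/\mathrm{OPT}(\sigma) \ge H_k/2 \to \infty$ as $k$ grows. This exceeds any fixed ratio $c$ (the additive constant in competitiveness is harmless since $\mathrm{OPT}(\sigma)\to\infty$), which proves the theorem.

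I expect the main obstacle to be the bookkeeping that makes the pigeonhole bound rigorous against an \emph{arbitrary} deterministic strategy: first, that a client can only ever sit on a not-yet-forbidden colour, so at step $j$ the $k$ clients genuinely lie in only $k-j+1$ colours and the $k/(k-j+1)$ bound holds no matter how cleverly $\mathcal{A}$ recolours; and second, that the adversary's adaptive choice of $c_j$ is well defined precisely because $\mathcal{A}$ is deterministic, so its configuration just before step $j$ is known. A secondary point to check is that leaving the blockers mutually non-adjacent (they need edges only to the clients) keeps them conflict-free despite their singleton lists, so that every intermediate configuration remains properly colourable.
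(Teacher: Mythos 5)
Your proof is correct, but it takes a genuinely different route from the paper. The paper's adversary works with lists of size $2$: it builds two paths $P_1,P_2$ recursively, inspects the colours the deterministic algorithm gave their endpoints, and joins them through one or two fresh vertices so that one half must flip parity and be entirely recoloured; the cost recurrence $T(n)=2T(n/2)+n/2+O(1)$ then gives $\Theta(n\log n)$ via the Master theorem, against $\mathrm{OPT}=n$. You instead use a one-shot construction: $k$ independent clients with full lists $\{1,\dots,k\}$, followed by adaptively chosen singleton-list blockers adjacent to all clients, each pinned blocker evicting the most-populated client colour; the pigeonhole bound $\lceil k/(k-j+1)\rceil$ at step $j$ sums to the harmonic bound $k(H_k-1)=\Theta(n\log n)$. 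Your key invariants check out: the properness requirement plus the pinned singletons really do confine the $k$ clients to the $k-j+1$ surviving colours before step $j$, the adversary's choice of $c_j$ is well defined by determinism and is automatically a not-yet-forbidden colour, feasibility is preserved throughout (the reserve colour $c^\star$ and the mutually non-adjacent blockers), and the ratio $\gtrsim H_k/2$ defeats any constant $c$ even with an additive term since $\mathrm{OPT}\to\infty$. Comparing the two: the paper's construction buys a stronger qualitative statement --- non-competitiveness holds already on paths, i.e., on planar bipartite instances of maximum degree $2$ with lists of size $2$ --- whereas your construction needs lists of size $\Theta(n)$ and vertices of degree $\Theta(n)$. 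What yours buys in exchange is a more elementary and self-contained analysis (pigeonhole and a harmonic sum, no recurrence or Master theorem) with explicit constants, and it localizes the adaptivity in the list contents rather than in the evolving edge structure. Both arguments yield the same $\Theta(n\log n)$ worst-case cost and hence the same $\Theta(\log n)$ lower bound on the competitive ratio stated in the paper's corollary.
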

\begin{proof}
Consider two disjoint paths $P_1$ and $P_2$ on $\frac{n}{2}$ vertices where
each vertex $v \in V(P_1) \cup V(P_2)$ is assigned a list $L(v)=\{1,2\}$.
Let $\mathcal{A}$ be the best deterministic on-line algorithm and
let the colour assigned to vertex $v$ by $\mathcal{A}$ be denoted
by $\mathcal{A}(v)$. A cruel adversary proceeds as follows.
If $P_1$ and $P_2$ have two ends, call them $x$ and $y$ respectively, such that
$\mathcal{A}(x) \neq \mathcal{A}(y)$, the adversary reveals a vertex
$z$ with list $\{1,2\}$ attached to $x$ and $y$. Let the resulting
path be denoted by $P$. At this point, one of $P_1$, $P_2$
has to be re-coloured, and the algorithm incurs cost $\frac{n}{2} + 1$.
If there are no two ends with distinct colours, the adversary
first reveals a vertex $z_0$ attached to one end of $P_1$ with list $\{1,2\}$
and then proceeds as in the previous case in order to join $P_1$ to $P_2$
(see Fig.~\ref{fig:construction}).
If the adversary constructed $P_1$ and $P_2$ in the same fashion (inductively starting
from isolated vertices), the on-line algorithm pays total cost
$$\mathcal{A}(\sigma_P) = \mathcal{A}(\sigma_{P_1}) + \mathcal{A}(\sigma_{P_2}) + \frac{n}{2} + q$$
where $\sigma_P, \sigma_{P_1}$ and $\sigma_{P_2}$ denote the sequences
of revealing $P, P_1$ and $P_2$ respectively and $q \in \{1,2\}$ is a constant
dependent of the way of connecting $P_1$ and $P_2$.
Applying the Master theorem~\cite{cit:cormen}, we obtain that $$\mathcal{A}(\sigma_P) \in \Theta(n\log n).$$

The best off-line algorithm knows the entire graph and thus it is
capable of colouring $P$ with cost $n$ (and in linear time). Hence,
there is no constant $c$ such that $\mathbb{\textsc{Opt}}(\sigma_P) \leq c \cdot \mathcal{A}(\sigma_P).$\qed
\end{proof}

\begin{figure}[tb]\centering
\begin{tabular}{cc}
   \includegraphics[trim=0mm 9mm 0mm 10mm, clip=true, width=0.45\textwidth]{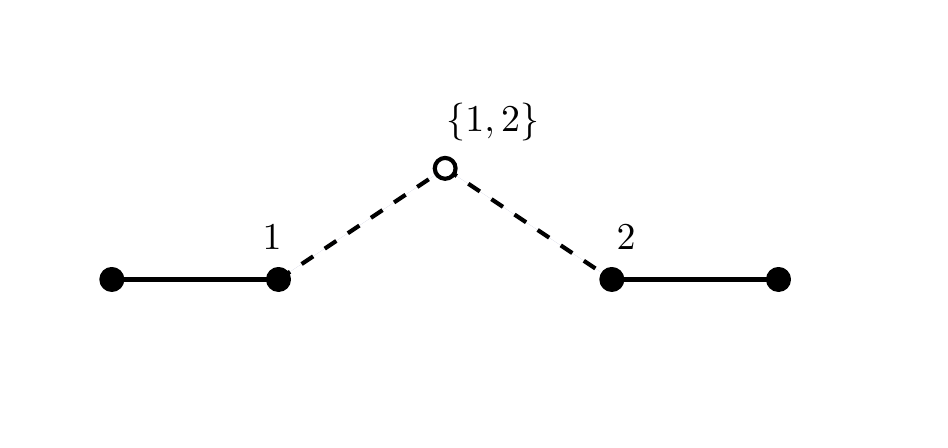}
   &
   \includegraphics[trim=0mm 9mm 0mm 10mm, clip=true, width=0.45\textwidth]{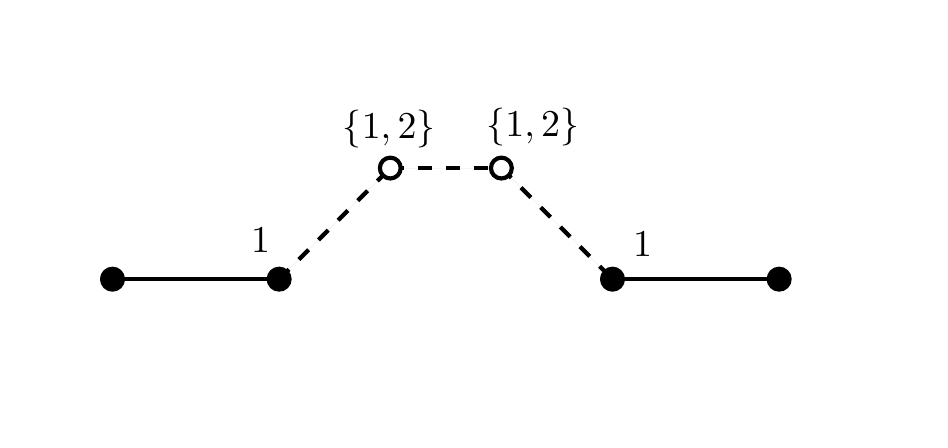}
   \\
   (a)
   &
   (b)
\end{tabular}
 \caption{Illustration for proof of Theorem~\ref{thm:no-competitive}.
(a) If two ends of two paths have distinct colours, the adversary connects them
with one vertex. (b) If two ends have the same colour, the adversary connects them
with two vertices.} \label{fig:construction}
\end{figure}

The proof of Theorem~\ref{thm:no-competitive} also provides a lower bound on
the worst case performance of an on-line algorithm under the true on-line
model for list colouring.

\begin{corollary}
    A lower bound on the worst case performance of any deterministic
    on-line algorithm in the true model for list colouring is $\Theta(n \log n)$
    where $n$ is the number of vertices in a graph. A lower bound on the
    competitive ratio beyond constant is $\Theta(\log n)$.
\end{corollary}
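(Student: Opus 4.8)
The plan is to read off both bounds directly from the recurrence established in the proof of Theorem~\ref{thm:no-competitive}; no new construction is needed. Recall that the adversary builds the final path $P$ on $n$ vertices by first building two paths $P_1, P_2$ on $n/2$ vertices each in exactly the same recursive manner, then joining them with $q \in \{1,2\}$ fresh vertices and forcing one of the two sub-paths to be recoloured. This yields
$$\mathcal{A}(\sigma_P) = \mathcal{A}(\sigma_{P_1}) + \mathcal{A}(\sigma_{P_2}) + \frac{n}{2} + q,$$
which, writing $T(n) = \mathcal{A}(\sigma_P)$ and using that $P_1, P_2$ have $n/2$ vertices, is the recurrence $T(n) = 2\,T(n/2) + \Theta(n)$ with constant base case (an isolated vertex costs $1$). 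By case~2 of the Master theorem this solves to $T(n) = \Theta(n \log n)$.

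First I would stress that this is a genuine lower bound over \emph{all} deterministic on-line algorithms: the adversary's decision (whether to join with one vertex or two) depends only on the colours $\mathcal{A}$ has assigned to the current ends of $P_1$ and $P_2$, so the same strategy forces the recurrence regardless of which deterministic $\mathcal{A}$ is played. Hence every deterministic algorithm performs $\Omega(n \log n)$ colour assignments on some input of size $n$, giving the claimed $\Theta(n \log n)$ worst-case lower bound.

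For the competitive ratio I would compare against the off-line optimum. The graph $P$ produced is a single path, hence bipartite, so an off-line algorithm that knows $P$ colours it properly from the lists $\{1,2\}$ with exactly one assignment per vertex, i.e. $\textsc{Opt}(\sigma_P) = \Theta(n)$. Dividing the two quantities,
$$\frac{\mathcal{A}(\sigma_P)}{\textsc{Opt}(\sigma_P)} = \frac{\Theta(n \log n)}{\Theta(n)} = \Theta(\log n),$$
which is the stated lower bound on the competitive ratio.

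The only real subtlety---and the point I would be most careful about---is the uniformity of the adversary across the recursion: to apply the Master theorem I need the recurrence to hold simultaneously for the sub-instances, which requires that $P_1$ and $P_2$ are themselves built recursively by the adversary and that the ``no two ends of distinct colour'' branch (the $q=2$ case) never breaks the $\Theta(n)$ additive term. Since $q$ is bounded by a constant and the recoloured sub-path contributes its full $n/2$ cost in either branch, the additive term stays $\Theta(n)$ and the solution is unaffected.
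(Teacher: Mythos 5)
Your proposal is correct and matches the paper's (implicit) argument exactly: the corollary is stated in the paper as an immediate consequence of the proof of Theorem~\ref{thm:no-competitive}, and you derive it the same way, reading $\Theta(n\log n)$ off the recurrence $T(n)=2T(n/2)+\Theta(n)$ via the Master theorem and dividing by the off-line cost $\Theta(n)$ to get the $\Theta(\log n)$ ratio. Your added remarks on adversary uniformity and the bounded constant $q$ are sound and only make explicit what the paper leaves implicit.
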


\subsection{Lazy Game Model}

Our endeavour to analyse models for on-line list colouring
includes attempts to improve the game model. In particular, we aim for reducing
the number of erasers that Mrs.~Correct needs for winning strategy.
In this respect, we considered a \emph{lazy game model}, where
Mrs.~Correct can decide to defer her move for later (i.e., she is temporarily allowed to
keep the graph coloured improperly). In this model,
Mr.~Paint suggests a set of vertices $S_i$ that he wants to colour
with colour $i$. If Mrs.~Correct defers her decision, Mr.~Paint
proceeds with suggesting additional set of vertices $S_{i+1}$ disjoint from $S_i$
that should be coloured with colour $i+1$.

\begin{theorem}
Let $G$ be a graph and $\ell$ be an assignment of erasers such that
$G$ is $\ell$-paintable (in the ``classical'' game model).
If Mrs.~Correct defers all her decisions and corrects
the colouring of Mr.~Paint only when all the vertices of the graph are coloured,
she has a winning strategy in the lazy game model.
\end{theorem}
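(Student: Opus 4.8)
The plan is to prove the statement by \emph{simulation}: I would use the hypothesised classical winning strategy to dictate Mrs.~Correct's deferred corrections, and then argue that the play she produces is, from the standpoint of the surviving graph and of each vertex's eraser count, identical to a legal play of the classical game in which she simply follows that strategy. Intuitively, deferring can only help her, because she sees all of Mr.~Paint's suggestions for a batch of colours before having to commit, and because a deferral costs her no eraser whereas the classical move that mimics it---erasing a whole round---does.

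Concretely, I would first fix a classical winning strategy $\Sigma$ for Mrs.~Correct, which exists because $G$ is $\ell$-paintable. In the lazy model she defers every decision, so during the stretch of play that brings the whole graph to a (tentatively) coloured state Mr.~Paint must present pairwise disjoint sets $S_1, S_2, \ldots, S_k$, the disjointness coming from the fact that a tentatively coloured vertex is never re-offered while a correction is still pending. Her correction is then defined as an offline replay of $\Sigma$ against this sequence: feed $S_1$ to $\Sigma$ as Mr.~Paint's first move and record the set $E_1 \subseteq S_1$ it erases, feed $S_2$ as the next move in the resulting virtual position and record $E_2$, and so on through $S_k$; she finally erases exactly $E_1 \cup \cdots \cup E_k$ and keeps every remaining vertex with the colour it was offered. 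The erased vertices return uncoloured and are handled by continuing the very same virtual $\Sigma$-game in the next deferred stretch.

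The step I expect to carry the weight is checking that this replay is a \emph{legal} classical play to which the winning guarantee of $\Sigma$ applies. Two observations make it so: since each $S_{i+1}$ is disjoint from $S_1 \cup \cdots \cup S_i$ it is in particular disjoint from the vertices $\Sigma$ has already coloured, so it is a valid move on the uncoloured part of the virtual position; and a strategy that defeats every adaptive Mr.~Paint certainly defeats the non-adaptive Mr.~Paint who commits in advance to $S_1, \ldots, S_k$, so $\Sigma$'s responses are guaranteed to be valid moves and---applied across all the deferred stretches---eventually winning. Because $\Sigma$ only ever colours properly, the vertices kept across a stretch, namely $\bigcup_i (S_i \setminus E_i)$, carry a proper partial colouring, and hence erasing $\bigcup_i E_i$ in one stroke is a legitimate lazy correction. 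Each vertex $v$ is erased in the lazy game exactly when $\Sigma$ erases it, so it never exceeds its budget $\ell(v)$, and since $\Sigma$ eventually colours all of $G$, Mrs.~Correct wins.

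The main obstacle is precisely this reconciliation: one must be careful that deferral does not enlarge the set of positions Mr.~Paint can force, so that every suggestion sequence arising against the deferring Mrs.~Correct is one $\Sigma$ is already contracted to beat, and that the erased vertices re-entering in later stretches splice consistently onto the same virtual game rather than starting a fresh, unaccounted one. Once that bookkeeping is in place the result follows, since Mrs.~Correct's erasures are in bijection with $\Sigma$'s and therefore inherit its budget guarantee.
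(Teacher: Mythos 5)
Your proof is correct, but it takes a genuinely different route from the paper's. You simulate: you fix a classical winning strategy $\Sigma$ and replay the pairwise-disjoint sets $S_1,\ldots,S_k$ of a deferred stretch as successive moves in a virtual classical game, erasing $E_1\cup\cdots\cup E_k$ in one stroke; the crucial legality check---that $S_{i+1}$, being disjoint from $S_1\cup\cdots\cup S_i$, lies inside the virtual position $(V'\setminus(S_1\cup\cdots\cup S_i))\cup E_1\cup\cdots\cup E_i$, and that a strategy beating every adaptive Mr.~Paint beats the less-informed lazy one---is exactly right, and the eraser accounting matches the virtual game move for move. The paper instead makes no use of the classical strategy at all: Mrs.~Correct precomputes an off-line $L$-colouring $\mathcal{C}$ for the lists $L(v)=\{1,\ldots,\ell(v)+1\}$ (which exists since paintability implies choosability) and uses $\mathcal{C}(v)$ as a static phase schedule, letting $v$ keep its colour in phase $\mathcal{C}(v)$ after erasing it in the $\mathcal{C}(v)-1\leq\ell(v)$ earlier phases; properness follows because Mr.~Paint never reuses a colour across phases, so a monochromatic adjacent pair would force $\mathcal{C}(u)=\mathcal{C}(v)$. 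Each approach buys something: the paper's argument is non-adaptive and actually establishes a stronger statement, since it needs only choosability with lists of size $\ell(v)+1$ rather than full $\ell$-paintability; your simulation preserves the classical strategy's behaviour play-by-play, so any finer per-adversary eraser guarantees of $\Sigma$ transfer to the lazy game, and it avoids invoking the paintability-implies-choosability fact. One small point of care: the paper's model only states that $S_{i+1}$ is disjoint from $S_i$, and your argument needs the (clearly intended, since a vertex cannot hold two tentative colours) pairwise disjointness within a stretch---worth stating explicitly as a reading of the model.
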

\begin{proof}
It is easy to observe that for any strategy of Mr.~Paint, if Mrs.~Correct
defers all her decisions, eventually all the vertices of $G$ are coloured
by Mr.~Paint (this is not a proper colouring yet). Once all the vertices of $G$ are coloured,
Mrs.~Correct starts using erasers. Let us call this portion of the game a~\emph{phase}.
Once Mrs.~Correct erases colour from some of the vertices, the game proceeds
with another phase---Mr.~Paint colours sets of vertices, Mrs.~Correct defers all
her decisions, and uses erasers when all the vertices are coloured.

While Mr.~Paint colours vertices with colours $c_1, c_2, c_3, \ldots, c_\infty$, Mrs.~Correct will be treating
those colours as pairs $(p_i, c_j)$, where $p_i$ denotes that colour $c_j$ was used
by Mr.~Paint in $i$-th phase.

In preparation for the game, Mrs.~Correct can preprocess the graph. She is going
to construct a list assignment $L$ such that $$L(v) = \{1,2,3,\ldots,\ell(v)+1\}$$
for every vertex $v$, and compute an $L$-colouring $\mathcal{C}$ of $G$. Note that as
$G$ is $\ell$-paintable, the list colouring always exists. This colouring $\mathcal{C}$
will be now guiding lazy game of Mrs.~Correct.

The goal of Mrs.~Correct is to ensure that every vertex $v$ receives a colour $(p_i, c_j)$ such
that $i = \mathcal{C}(v)$. We need to argue that such an on-line colouring will both (a) be proper (with
respect to colours $c_1, c_2, c_3 \ldots, c_\infty)$; and (b) that Mrs.~Correct has enough erasers
to achieve this goal.

\begin{enumerate}[(a)]
\item 
A colouring is not proper if $G$ contains two adjacent
vertices with the same colour $c_i$. In the view of Mrs.~Correct, every two adjacent
vertices $u,v$ are coloured with colours $(p_{\mathcal{C}(u)},c_j)$ and $(p_{\mathcal{C}(v)}, c_k)$.
Since Mr.~Paint never re-uses a colour, no colour can be used in multiple phases.
So, if $c_j = c_k$, then ${\mathcal{C}(u)} = {\mathcal{C}(v)}$. However, since $\mathcal{C}$
is a proper colouring, it is impossible if $u,v$ are adjacent.
\item 
The goal is that Mrs.~Correct allows vertex $v$ to keep its colour in phase $p_{\mathcal{C}(v)}$.
This requires that $v$ has $\mathcal{C}(v)-1$ erasers available at the beginning of the
game. However, $\mathcal{C}(v) \leq \ell(v)+1$, thus $v$ requires at most $\ell(v)$ erases.
\qed
\end{enumerate}
\end{proof}

With respect to memory utilization, the lazy game model requires
loading multiple graph clusters at once. Thus, there is a clear
desire to minimize the number of deferrals that Mrs.~Correct makes.
The following theorem shows that there is no good bound on the number
of deferrals needed.

\begin{theorem}
There are graphs where Mrs.~Correct requires $\Theta(n)$ deferrals
where $n$ is the number of vertices in the graph.
\end{theorem}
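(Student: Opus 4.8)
The plan is to exploit the gap between classical paintability and off-line choosability. The preceding theorem shows that deferrals let Mrs.~Correct essentially simulate off-line list colouring, so in the lazy model she can win with strictly fewer erasers than the classical game demands; I would exhibit an instance where cashing in this saving costs $\Omega(n)$ deferrals. As a building block I would take the complete bipartite graph $K_{6,9}$ with two erasers on every vertex. By Zhu~\cite{cit:zhu} this graph is \emph{not} $3$-paintable, so in the classical game Mr.~Paint wins; but $K_{6,9}$ is bipartite and hence $3$-choosable, so it is winnable in the lazy model. For the family I would let $G$ be the disjoint union of $m$ copies of $K_{6,9}$, so that $n=15m$ and every vertex carries two erasers.

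For the upper bound I would run the defer-all strategy of the preceding theorem. Its correctness needs only that a guide colouring $\mathcal{C}(v)\in\{1,\dots,\ell(v)+1\}$ exists, which holds here because $G$ is bipartite (a proper $2$-colouring suffices), independently of the fact that $G$ is not classically paintable. In each correction phase Mrs.~Correct defers on every offer except the last, and since two erasers force termination after a bounded number of phases (at most three) while each phase contains at most $n$ offers, this strategy uses $O(n)$ deferrals. Hence $G$ is lazily winnable with $O(n)$ deferrals.

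The heart of the argument is the matching lower bound. I would have Mr.~Paint process the copies one at a time: he offers colours only to the currently \emph{active} copy (the lowest-indexed copy not yet finalised), following a fixed classical winning strategy for $K_{6,9}$, and he advances to the next copy only once the active one is completely and properly coloured. The key step is a contrapositive: if Mrs.~Correct issued no deferral throughout the phase in which some copy is active, her responses there would constitute a legal play of the \emph{classical} game on that copy, and Mr.~Paint's classical winning strategy would defeat her, contradicting that she wins overall. Therefore she must defer at least once while each copy is active; because the active phases of distinct copies occupy disjoint sets of rounds, these deferrals are distinct, and any winning strategy makes at least $m=\Omega(n)$ of them. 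Together with the upper bound this yields $\Theta(n)$.

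I expect the main obstacle to be making this sequential composition fully rigorous. Two points need care. First, once Mrs.~Correct defers, Mr.~Paint leaves the regime described by his classical strategy; I would resolve this by letting him simply finalise the active copy with singleton offers, which Mrs.~Correct can always accept safely, since keeping a singleton reduces the game to a subgraph and preserves paintability by Proposition~\ref{prop:carraher}(a). This guarantees that every phase terminates and the next begins. Second, I must confirm that a single deferred round cannot be ``shared'' between two copies; this follows because only the active copy ever carries pending offers, so a deferral during the phase of copy $j$ cannot be charged to any other copy. The gadget itself is handed to us by Zhu's theorem, so the difficulty lies entirely in the bookkeeping of the adversary's phases rather than in the combinatorics of a single copy.
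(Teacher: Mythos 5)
Your proof is correct, but it takes a genuinely different route from the paper's. The paper keeps a \emph{single} copy of Schauz's gadget (Fig.~\ref{fig:schauz}), on which Mr.~Paint wins classically, and subdivides the edge $(v_5,v_6)$ by $n-8$ vertices carrying one eraser each; every deferral by Mrs.~Correct is answered by a throwaway move on a fresh subdivision vertex, which leaves Mr.~Paint's classical winning strategy untouched, so fewer than $n-7$ deferrals cannot save her. You instead take $m$ disjoint copies of a gadget ($K_{6,9}$ with two erasers per vertex, not $3$-paintable by Zhu's theorem) and charge one deferral to each copy via the contrapositive ``no deferral during a copy's phase implies a legal classical play on that copy, which Mr.~Paint's classical winning strategy defeats''; the bookkeeping you flag (finalising an abandoned copy with fresh-colour singleton offers, which are always safe to accept, and the disjointness in time of the phases) does go through. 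Two remarks. First, you rightly do not invoke the lazy-model theorem as a black box---its stated hypothesis is classical $\ell$-paintability, which your graph deliberately violates---and your observation that its proof uses only the existence of a proper guide colouring with $\mathcal{C}(v)\le\ell(v)+1$ is exactly right; note that the same observation is implicitly needed to get the paper's own $O(n)$ upper bound, since the paper's subdivided graph $G'$ is not classically paintable either. Second, the trade-off: the paper's construction is connected and gives the sharper constant ($n-7$ forced deferrals on one gadget), but it leans on re-verifying Mr.~Paint's strategy on the subdivided graph, including a parity caveat ($n$ even); your disjoint-union argument yields only $n/15$ but is more modular---any gadget witnessing the gap between classical paintability and the existence of such a guide colouring would serve---and replaces the ad hoc strategy analysis by a clean one-deferral-per-copy charging scheme.
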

\begin{proof}
We consider graph $G'$ obtained from the graph $G$ in Fig.~\ref{fig:schauz} (see Appendix~\ref{ap:winning-strategy})
by subdividing the edge $(v_5,v_6)$ by $n-8$ vertices $y_1,y_2,\ldots,y_{n-8}$ (Fig.~\ref{fig:th21}(a)).
Each vertex $y_i, 1 \leq i \leq n-8$, is assigned $1$~eraser.

In $G$, Mr.~Paint has a winning strategy.
In graph $G'$, using the ``classical'' game model, Mr.~Paint still has a winning strategy
if $n$ is even, since parities of all the cycles are preserved.
Consider the lazy game model for the graph $G'$. Every time Mrs.~Correct
decides to defer her turn, Mr.~Paint can choose to colour an uncoloured vertex $y_i$,
$1 \leq i\leq n-8$, without changing his winning strategy from the ``classical'' game model (the strategy is
described in Appendix~\ref{ap:winning-strategy}).
Thus, Mrs.~Correct requires at least $n-7$ deferrals, otherwise Mr.~Paint still has
a winning strategy.
\end{proof}

\subsection{Strong Game Model}

Aiming further for minimizing the number of erasers needed by
Mrs.~Correct, the following \emph{strong game model} appears to be very promising.
In the strong game model, Mrs.~Correct begins each round with selecting
a vertex that has to be included in Mr.~Paint's move. The rest of the vertices
suggested by Mr.~Paint can be arbitrary.
Such a model is in fact very natural from the implementation point of view:
an application can choose the order of iteration though colours (for example based
on a pre-processing of the graph).

In order to justify that this idea is worth future investigation, we
prove that the model discards Mr.~Paint's winning strategy on the
graph in Fig.~\ref{fig:schauz} (Appendix~\ref{ap:winning-strategy}).
A winning strategy for Mrs.~Correct,
 on the same graph, call it $G$, and assignment of erasers under the strong game model
is as follows:
\begin{figure}[tb]\centering
\begin{tabular}{ccc}
   \includegraphics[width=0.30\textwidth]{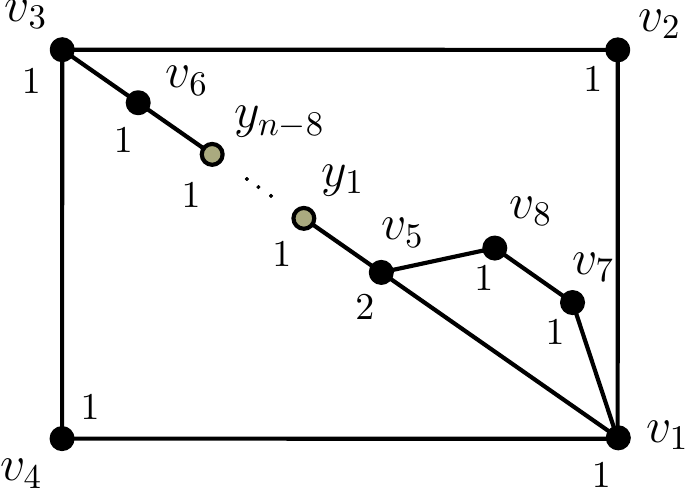} &\hspace{9mm} &
   \includegraphics[width=0.30\textwidth]{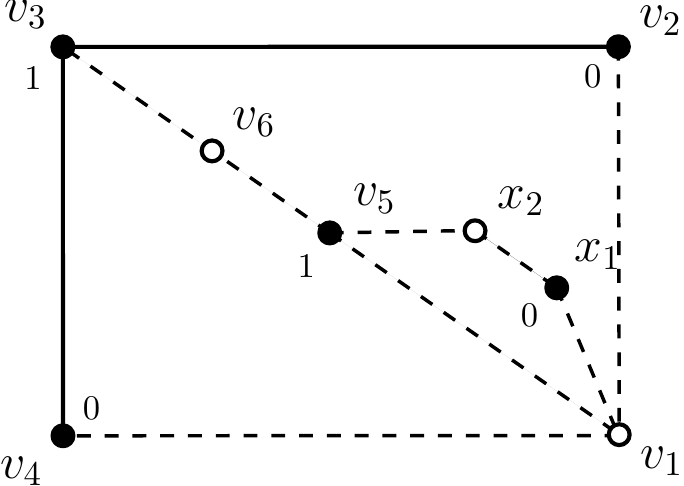}
   \\
  (a) & & (b) \\
\end{tabular}
\caption{(a)~A~graph with an assignment of erasers where Mrs.~Correct has a winning strategy if she defers $\Theta(n)$ decisions.
        Every vertex except for $v_5$ is assigned $1$~eraser; $v_5$ is assigned $2$~erasers.
(b)~Example of Mr.~Paint's first move $P_1 = \{v_1,v_2,v_4,v_5,v_6,x_1,x_2\}$ in the strong game.
After Mrs.~Correct uses erasers for $C_1 = \{x_1,v_2,v_4,v_5\}$, the game continues on a forest of trees.}
\label{fig:th21}
\label{fig:strong}
\end{figure}

At the beginning of the game, Mrs.~Correct chooses $v_1$ to be coloured
by Mr.~Paint in the first round. So, Mr.~Paint chooses set $P_1$ which includes
$v_1$ and perhaps some additional vertices. Note that by Lemma~\ref{lem:connected}, we can
assume that $G[P_1]$ is connected. Mrs.~Correct leaves $v_1$ coloured,
and uses an eraser for vertices in $P_1$ whose distance from $v_1$ is odd.
When the coloured vertices are removed, $G$ falls apart into
a forest of trees that can be rooted so that every non-leaf node
has at least one eraser available (see Fig.~\ref{fig:strong}(b)). It is easy to see that on such a graph,
Mrs.~Correct wins under the ``classical'' game model, and thus also under the strong
game model.

\section{Conclusions}
\label{sec:conclusions}
We considered the on-line graph list colouring problem under several
models. 
We extended the previous results about paintability of planar graphs
to some specific graph classes. 
We provided an inductive argument for $3$-paintability of series-parallel graphs. 
For future work, we would like to suggest 
extending the following two theorems to the on-line setting:

\begin{theorem}[Dvo\v{r}\'ak, Lidick\'y, \v{S}krekovski~\cite{cit:dvorak,cit:dvorak-5}] 
Any planar triangle-free graph without $4$-cycles adjacent to $4$- and $5$-cycles is $3$-choosable.
Any graph that can be drawn with at most two crossings is $5$-choosable.
\end{theorem}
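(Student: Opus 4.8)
The two assertions are logically independent, so I would prove them separately; however both are amenable to the \emph{discharging method} applied to a minimal counterexample, which is the standard engine behind planar and near-planar (list-)colouring theorems of this type.

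For the first statement, let $(G,L)$ be a counterexample with $|L(v)| = 3$ for every $v$ that minimises $|V(G)| + |E(G)|$. First I would record the routine reductions: $G$ is $2$-connected and has minimum degree at least $3$, since a vertex $v$ of degree at most $2$ can be deleted, the remainder $3$-coloured by minimality, and $v$ then coloured from its size-$3$ list while avoiding its at most two neighbours. The heart of the argument is a catalogue of \emph{reducible configurations}: one shows that short separating arrangements of $4$- and $5$-cycles cannot occur, by deleting or identifying a small piece, $3$-colouring the smaller instance, and extending. Triangle-freeness forces every face to have length at least $4$, and the hypothesis that no $4$-cycle is adjacent to a $4$- or $5$-cycle is exactly what blocks the obstructing extensions. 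Assigning the Euler charge $\mu(x) = d(x) - 4$ to each vertex and face gives $\sum_{x} \mu(x) = -8$; I would then devise discharging rules sending charge from large faces and high-degree vertices to the only deficient objects---degree-$3$ vertices and $4$-faces---and show that every final charge is non-negative, contradicting the total $-8$. The main obstacle here is calibrating the reducibility lemmas and the discharging rules so that the $4$-/$5$-cycle adjacency condition is used tightly rather than wasted.

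For the second statement I would build on the $5$-choosability of planar graphs, established via Thomassen's induction~\cite{cit:thomassen} and its on-line analogue~\cite{cit:schauz}, and induct on the number of crossings. Fix a drawing realising at most two crossings and let $(G,L)$ with $|L(v)| = 5$ be a vertex-minimal counterexample; deleting a vertex does not increase the number of crossings, so the usual degree argument yields $\delta(G) \ge 5$. The natural device is to \emph{planarise} the drawing by replacing each crossing with a dummy vertex of degree $4$, producing a planar graph $G^{+}$ with at most two extra vertices to which Euler's formula and discharging apply, the reducible configurations now being sought in the neighbourhood of the at most two crossings. The step I expect to be hardest is the lift: because list colouring forbids the delete-and-recolour manoeuvre available for ordinary colouring, one must argue that a colouring of the decrossed or planarised graph can be transported back to $G$ without making either crossing pair monochromatic, and it is precisely in spending the surplus of the size-$5$ lists across the two crossings that the bulk of the case analysis will lie.
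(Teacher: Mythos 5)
This statement is not proved in the paper at all: it is imported verbatim from Dvo\v{r}\'ak, Lidick\'y and \v{S}krekovski~\cite{cit:dvorak,cit:dvorak-5} and quoted in the conclusions solely as a target for future work, the paper's one substantive remark being that \emph{``the proofs of both the theorems are inductive''}---Thomassen-style precolouring inductions, which is exactly why the authors care, since that is the proof style known to transfer to paintability (as in Schauz's adaptation~\cite{cit:schauz} of~\cite{cit:thomassen}). Measured against that, your proposal has two problems. First, it is a research plan rather than a proof: for the triangle-free statement, the correct bookkeeping ($\mu(x)=d(x)-4$ summing to $-8$, minimum degree at least $3$ in a minimal counterexample) is the trivial part, while everything that constitutes the theorem---the catalogue of reducible configurations, the discharging rules, and the verification that the $4$-/$5$-cycle adjacency hypothesis makes every final charge non-negative---is explicitly deferred (``I would devise\ldots''). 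Second, your chosen engine runs against the grain of the cited arguments: a minimal-counterexample discharging proof, even if completed, would not exhibit the inductive precolouring structure that the paper identifies as the feature worth extending on-line.

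There is also a concrete technical failure in your sketch for the two-crossings statement. Planarising a crossing of edges $uv$ and $xy$ by a dummy vertex $w$ of degree $4$ loses precisely the constraints that matter: a proper $L$-colouring of the planarised graph $G^{+}$ forces only $c(u)\neq c(w)$, $c(v)\neq c(w)$, $c(x)\neq c(w)$, $c(y)\neq c(w)$, and says nothing about $c(u)$ versus $c(v)$ or $c(x)$ versus $c(y)$, so colourings of $G^{+}$ do not lift to $G$ (nor does $w$ even carry a list). You flag ``the lift'' as the hard step, but flagging it supplies no mechanism, and the entire proof lives there. The published argument does not planarise: it deletes the at most two crossing edges to obtain a planar graph and controls the four (or up to eight) crossing endpoints through strengthened Thomassen-type induction hypotheses in which certain outer-face vertices are precoloured or carry shortened lists, with a case analysis over the mutual position of the two crossings. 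Your minimum-degree bound $\delta(G)\geq 5$ is correct (delete a vertex of degree at most $4$ and extend from the $5$-list), and $|E(G)|\leq 3|V(G)|-4$ indeed forces such a vertex to exist, but discharging on $G^{+}$ cannot by itself localise the difficulty at the crossings, because away from them the graph is simply planar and no configuration is reducible without the precolouring machinery. As it stands, both halves of the proposal name a method without executing it, and the second half's central device would fail even if executed.
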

The proofs of both the theorems are inductive and we believe that extension to the on-line setting is possible.

We also proposed two alternate models for on-line list colouring. We showed that both
the models differ from the ``classical'' game model of Mrs.~Correct and Mr.~Paint.

Our strong game model stands somewhere in between off-line list colouring 
and the ``classical'' on-line model. 
We showed that the model differs from the ``classical'' game model, but we do not believe that it
is equivalent to the list colouring problem off-line. 
It is known that traditional paintability is stronger than choosability. However, 
no example of a graph $G$ where $$\chol(G) > \ch(G)+1$$ is known (see e.q.~\cite[Question 4]{cit:regs}).
Therefore, in our opinion,
strong game paintability together with its relation to paintability and off-line choosability is very intriguing
and worth further investigation.






\newpage
\appendix

\section{Difference between Choosability and Paintability}
\label{ap:winning-strategy}

\begin{figure}\centering
   \includegraphics[width=0.29\textwidth]{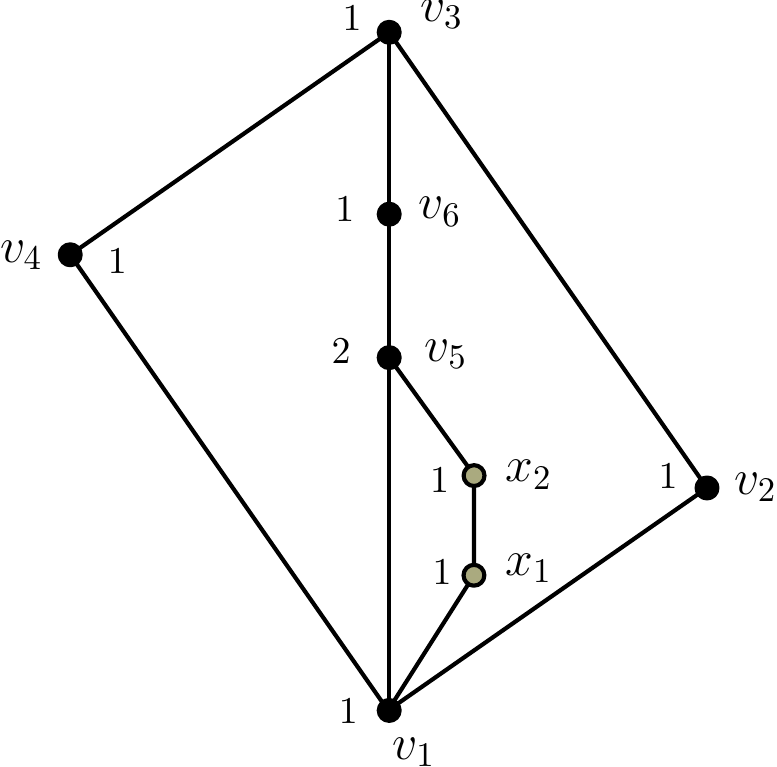}
\caption{A~graph with an assignment of erasers where Mr.~Paint has a winning strategy~\cite{cit:schauz}
            even though a respective (off-line) list colouring always exists. Every vertex except
        for $v_5$ is assigned $1$~eraser; $v_5$ is assigned $2$~erasers.}
\label{fig:schauz}
\end{figure}

The fact the list colouring on-line differs from list colouring off-line can be justified 
for instance on the graph depicted in Fig.~\ref{fig:schauz}. Consider an arbitrary 
list assignment that assigns every vertex except for $v_5$ a list of size $2$, and vertex $v_5$ a list
of size $3$. As proved in~\cite{cit:schauz}, the graph an (off-line) list colouring for any such $L$.

Consider analogous assignment of erasers $\ell$ which assigns every vertex
in $G$ except for $v_5$ one eraser. Vertex $v_5$ receives two erasers.
Mr.~Paint's winning strategy is as follows. He starts
by colouring with the colour $c_1$ the vertices $x_1$ and $x_2$.
Now, Mrs.~Correct has two options:
\begin{enumerate}
\item \textit{Mrs.~Correct chooses to use the only eraser of $x_1$.}
Mr.~Paint colours the following sets of vertices at each of his turns:
first $x_1$ and $v_1$ using colour $c_2$, then $v_1$, $v_4$ and $v_2$
using colour $c_3$, then $v_4$ and $v_3$ using colour $c_4$,
then finally $v_2$ and $v_3$ using colour $c_5$.
\item \textit{Mrs.~Correct chooses to use the only eraser of $x_2$.}
Mr.~Paint wins by assigning first the same colour to $x_2$ and $v_5$, this
way neutralizing one of the erasers of $v_5$, and next assigning
the same colour to the odd cycle $(v_1,v_4,v_3,v_6,v_5)$ where
each vertex has only one eraser left.
\end{enumerate}

\section{Classical Model}
\label{ap:game}

\begin{definition}[Paintability---reformulation~\cite{cit:schauz}]
For an assignment of era\-sers $\ell \in \mathbb{N}^V$ and a graph $G = (V,E)$,
\emph{paintability} is recursively defined as follows:
    \begin{enumerate}[(i)]
    \item $G = \emptyset$ is $\ell$-paintable (since $V = \emptyset$, $\ell$ is an empty tuple).
    \item $G \neq \emptyset$ is $\ell$-paintable if $\ell(v) \geq 1$ for every $v \in V$, and every non-empty subset $V_P \subseteq V$
    of vertices contains a \emph{good} subset $V_C \subseteq V_P$ such that $G[(V \setminus V_P) \cup V_C]$
    is $\ell'$-paintable where $\ell'(v) = \ell(v) - 1$ for $v \in V_C$ and $\ell'(v) = \ell(v)$ otherwise.
    \end{enumerate}
\end{definition}

\begin{proof}[of Theorem~\ref{thm:degeneracy}]
Let $k \geq 0$ be arbitrary, but fixed.
Proceed by induction on the number of vertices $n$. If $n = 1$,
the graph has one isolated vertex and thus, it is $(k+1)$-paintable.
Let $G$ be a $k$-degenerate graph with $n > 1$. By definition, it
has vertex $v$ of degree at most $k$ such that $G - v$ is also a
$k$-degenerate graph. So, by induction, $G-v$ is $(k+1)$-paintable
and $G$ is $(k+1)$-paintable
by Lemma~\ref{lem:good-degree}. \qed
\end{proof}

\begin{proof}[of Theorem~\ref{thm:sp}]
By induction on the number, we prove the following claim:
Graph $G' = G - (s,t)$ (if edge $(s,t)$ does not exist, $G' = G$) with $n \leq k$ is $\ell'$-paintable for
$\ell'(s) = \ell'(t) = 0$ and $\ell'(v) = \ell(v) = 2$ for any other vertex $v$. Note that the claim immediately implies
$\ell$-paintability of $G$ if it does not contain edge $(s,t)$. If edge $(s,t)$ is present, Mrs.~Correct
can follow the same strategy as in $G'$, but use the one eraser assigned to $t$ by $\ell$ in order
to ensure different colours of $s$ and $t$. Thus, $\ell$-paintability of $G$ is guaranteed in this case as well.

In the base case for $n = 2$, $G$ is an edge $(s,t)$ and $G'$ if formed by two isolated vertices.
Thus, Mrs.~Correct does not need any erasers and $G'$ is $\ell'$-paintable.

So, let $G$ be a series parallel graph with $k+1$ vertices. By definition, $G$ is obtained
from two series-parallel graphs, call them $G_1,G_2$ with sources $s_1,s_2$ and sinks $t_1,t_2$
respectively. We proceed in two cases:

\smallskip
\noindent \textit{Case 1: $G$ is obtained by parallel composition.}
By definition of the parallel composition $s = s_1 = s_2$ and $t = t_1 = t_2$. Let Mr.~Paint's
move be set of vertices $V_P$. By the recursive definition of paintability, the task of Mrs.~Correct
is to find a good subset $V_C$ where erasers will be used such that $G[(V \setminus V_P) \cup V_C]$ is
paintable for the respective assignment of erasers (the number of erasers for vertices in $C$ decreases).
By induction, this is possible for both graphs $G_1-(s,t), G_2-(s,t)$ with assignment of erasers $\ell'$, and Mr.~Paint's moves $V(G_1) \cap V_P$
and $G_2 \cap V_P$. Denote the good subsets played by Mrs.~Correct $V_{C_1}$ and $V_{C_2}$ respectively.
Note that Mrs.~Correct cannot include $s$ and $t$ in her good subset in any of the graphs  as they do not have any erasers. Thus,
set $V_C = V_{C_1} \cup V_{C_2} (\subseteq V_P)$ is good for $G-(s,t)$ and it is $\ell'$-paintable. If $G$ does not
contain edge $(s,t)$, it is $\ell$-paintable immediately, otherwise the aforementioned argument applies.

\smallskip
\noindent \textit{Case 2: $G$ is obtained by series composition.}
Note that in this case, $G$ does not contain edge $(s,t)$, so $G = G'$. Furthermore,
$t_1 = s_2$ is assigned 2 erasers by $\ell$.
Again, we show that Mrs.~Correct can find a good subset $V_C$ in any move $V_P$ of
Mr.~Paint. By induction, $G_1 - (s_1,t_1)$ and $G_2-(s_2,t_2)$ are both $\ell'$-paintable.
Analogously as in the previous case, Mrs.~Correct will play the union of her winning
strategy moves for the smaller graphs. When playing on $G$ with eraser assignment $\ell$,
whenever some of $s_1, t_1=s_2, t_2$ appear in Mr.~Paint's move, Mrs.~Correct can use one of the
two erasers assigned to $t_1=s_2$, possibly in combination with the eraser of $t_2$, in order
to guarantee that those three vertices are painted differently in case edges $(s_1,t_1)$ and
$(s_2,t_2)$ are present.
\qed
\end{proof}


\begin{proof}[of Lemma~\ref{lem:good-degree}]
We proceed by induction on $k$. If $k = 0$, the vertex
is isolated. Hence, it can stay coloured whenever it is
included in Mr.~Paint's move. So, assume that $k \geq 1$.
Let $G-v$ be $\ell$-paintable graph. Consider
the options how Mr.~Paint can colour $v$. If Mr.~Paint colours
set $P \subseteq V(G)$ which includes $v$ but does not include
any of its neighbours, Mrs.~Correct can leave $v$ coloured and
does not need an eraser. If $P$ includes some vertices of $v$
and Mrs.~Correct uses eraser for all of them, then $v$ can be
left coloured and Mrs.~Correct does not need to use an eraser either.
So, an eraser is needed if and only if $P$ includes some neighbours of $v$
and Mrs.~Correct's move $C$ does not use an eraser for at least one of them.
In the resulting state of game, $v$ has degree at most $k-1$ and at least $k-1$ erasers, so
the statement holds by induction. \qed
\end{proof}

\begin{proof}[of Proposition~\ref{prop:carraher}]
        \noindent (a) If Mrs.~Correct has a winning strategy on $G$, she can pursue the same strategy
        on $H$ ignoring the removed vertices and edges. As removing a structure from
        $G$ cannot invalidate any intermediate colouring created during the game, the
        strategy is winning on $H$.

        \smallskip
        \noindent (b) One implication holds by~(a): if $G$ is $\ell$-paintable, $G-v$ is its
        subgraph, so it is $\ell$-paintable too. Conversely, if $G-v$ is $\ell$-paintable,
        we consider what options of moves Mr.~Paint has with respect to $v$. If $v$ is
        included in Mr.~Paint's set which contains no neighbour of $v$, Mrs.~Correct can
        leave $v$ coloured and does not need an eraser. If some neighbours of $v$ are
        included in Mr.~Paint's move and Mrs.~Correct uses an eraser for all of them,
        then $v$ can be left coloured without using an eraser. So, erasers for $v$
        are needed only when a neighbour $w$ of $v$ belongs to Mr.~Paint's move together
        with $v$ and Mrs.~Correct does not use an eraser for $w$. However, this happens
        at most $k$-times.
\qed
\end{proof}

\end{document}